\def\dOi{10(3:20)2014}
\subjclass{G.3, F.1.1}
\theoremstyle{plain}
\begin{document}

\title[Fourier spectra of measures and Kolmogorov complexity]{Fourier spectra of  measures associated with algorithmically random Brownian motion}

\author[W.~L.~Fouch\'e]{Willem L. Fouch\'e}	%required
\author[S.~Mukeru]{Safari Mukeru}
\author[G.~Davie]{George Davie}

\address{Department of Decision Sciences, School of Economic Sciences, University of South Africa, P.O. Box 392, Pretoria, 0003, South Africa.}	%required
\email{\{fouchwl,mukers,davieg\}@unisa.ac.za}  %optional
%\thanks{*corresponding author.}	%optional

%% mandatory lists of keywords and classifications:
\keywords{Kolmogorov complexity, algorithmic randomness, Brownian motion, Hausdorff dimension, Fourier dimension, Salem sets.}
%\titlecomment{OPTIONAL comment concerning the title, \eg, if a variant
%or an extended abstract of the paper has appeared elsewehere}
%%%%%%%%%%%%%%%%%%%%%%%%%%%%%%%%%%%%%%%%%%%%%%%%%%%%%%%%%%%%%%%%%%%%%%%%%%%

%% the abstract has to PRECEED the command \maketitle:
%% be sure not to issue the \maketitle command twice!
\maketitle
\begin{center}
%\vspace*{1in}
{\it To Dieter Spreen on the occasion of his 65th birthday.} 
\end{center}
\begin{abstract}
\noindent In this paper we study the behaviour at infinity of the Fourier transform of Radon measures supported by the images of fractal sets under an algorithmically random Brownian motion. We show that, under some computability conditions on these sets, the Fourier transform of the associated measures have, relative to the Hausdorff dimensions of these sets, optimal
asymptotic decay at infinity. The argument relies heavily on a direct characterisation, due to Asarin and Pokrovskii, of algorithmically random  Brownian motion in terms of the prefix-free Kolmogorov complexity of finite binary sequences. The study also necessitates a closer look at the potential theory over fractals from a
computable point of view. %The paper  requires a direct encoding of algorithmically random Brownian motion by binary strings which are Kolmogorov -- %Martin-L\" of random.
\end{abstract}

\section{Introduction}

\begin{quote}
\emph {It is important to add here that order is not to be identified
with predictability. Predictability is a property of a special kind of
order such that a few steps determine the whole order (...as in
curves of low degree) – but there can be complex and subtle
orders which are not in essence related to predictability \linebreak (...a
good painting is highly ordered, and yet this order does not permit one part to be predicted from another).} (David Bohm \cite{bohm:1} p 149.)

\end{quote} 
\vspace{5mm}
%\section{Introduction}
In 1869 Heine proposed to Cantor the problem of determining whether or not every trigonometric series $\sum_{n\in \mathbb{Z}}c(n) e^{i n x}$ that converges to $0$ at all real numbers $x$ will have all its coefficients $c(n)$ necessarily equal to $0$ or equivalently, whether two trigonometric series which converge to
the same limit for every real number $x$ are (formally) equal in the sense that their coefficients are the same. In 1870, by making extensive use of Riemann's work in his Habilitationsschrift (1854) on trigonometric series, Cantor proved that the answer to Heine's problem was in the affirmative.
% The main problem now became the following: Does the answer remain ``yes'' if  exceptional points are allowed. That is, if it is known that the two series converge  to the same limit for $x\in \mathbf{R}\setminus E$, but nothing is known for points in a certain subset $E$, is it true that the series are identical? 
Eventually Cantor showed that the same holds true if a trigonometric series converges everywhere with the possible exception of a closed countable set. This is arguably the first example of a mathematical argument that is based on a transfinite induction. For more background on this historical development, the reader is referred to the book \cite{kelo:1}.

These works of Riemann and Cantor pioneered a very highly developed theory of so-called sets of uniqueness in harmonic analysis with wide-ranging implications for number theory (Diophantine approximation) and descriptive set theory. For the former, the reader could consult the book by Meyer \cite{Mey:01} and for the
latter the book by Kechris and Louveau \cite{kelo:1}.

A set $E \subset [0, 1]$ is a {\it set of uniqueness} if every trigonometric series which converges to $0$ for $x\in [0, 1]\setminus E$ is identically $0$, or equivalently, if any two trigonometric series that converge to the same limit for every real $ x\in [0, 1]\setminus E $ are identical. Intuitively, this means that
$[0, 1]\setminus E $ is sufficiently  ``large'' to ensure that if the two series already agree pointwise on it, then they are formally the same , in the sense that they have the same coefficients.
A set which is not a set of uniqueness is called a {\it set of multiplicity}. More specifically, a subset $M \subset [0, 1]$ is a set of multiplicity if there exist two (formally) distinct trigonometric series that converge to the same limit outside $M$.

It is well-known that if $E$ is a set of uniqueness and is Lebesgue-measurable, then its Lebesgue measure is $0$. It was first conjectured by Luzin  that all Lebesgue null sets should be sets of uniqueness until Menshov (1916) \cite{menshov:1} constructed an example of a closed null set of multiplicity. Since then major progress has been made on
this problem.  Salem and Zygmund \cite{SaZy:1} characterized Cantor type sets of fixed ratio $\xi$ as being sets of uniqueness solely in terms of the number theoretical structure of $\xi$.
 (The resulting Cantor set is a set of  multiplicity if and only if $\xi^{-1}$ is a so-called Pisot number. For a more modern proof, the reader is referred to the book \cite{Mey:01}.) 
However, the characterisation of sets of uniqueness is very far from being complete. Indeed, as is discussed in \cite{kelo:1}, it follows from arguments by Solovay (unpublished) and independently by Kaufman \cite{kau:1} that the complexity of the problem is, from the viewpoint of descriptive set theory, of an
intrinsic nature. 

For the purpose of this paper, it is important to note that a compact subset $E$ of the unit interval is a set of multiplicity if and only if there is a distribution $T$ (in the sense of Schwartz) , also called a temperate distribution, supported by $E$ such that its Fourier transform is such that $\hat{T}(u)\rightarrow 0$ as $|u|\rightarrow \infty$. (See, for example Meyer \cite{Mey:01}, Theorem 1, p 81.) In particular, a compact set $E$ will be a set of multiplicity if it is supported by a Radon probability measure $\mu$ with the property that its Fourier transform $\hat{\mu}(u)$ approaches zero as $|u| \rightarrow \infty$.

%The problem of uniqueness has many interactions with other areas of classical analysis, measure theory, functional analysis, number theory, and set theory. 
The following question was posed by Beurling and solved in the affirmative by Salem \cite{Sa:01} in 1950:
\begin{quote}
{\it  Given a number $\gamma \in (0, 1)$, does there exist a closed set on the line whose Hausdorff dimension is $\gamma$ and that carries a non-zero Radon measure $\mu$ whose Fourier transform $\hat \mu (u) = \int_{\mathbf{R}} e^{iux}d\mu(x)$ is dominated by a constant times 
$|u|^{-\gamma/2}$ as $|u| \rightarrow \infty$? }
\end{quote}
Such sets are instances of what are now called {\it Salem sets}. It is well-known  (see for example \cite[pp 162-163]{Ma:01})  that given a compact subset $E$ of $[0, 1]$ with Hausdorff dimension $\gamma \in (0, 1)$, the number $\gamma/2$ is critical for Beurling's question to have an affirmative answer since any Radon  measure $\mu$ supported by $E$ is such that the function 
$u \mapsto \vert u \vert^{\alpha} \hat\mu(u)$ is not bounded for any $\alpha> \gamma/2$. This is a direct consequence of (\ref{equation:energy}). See  also Chapter 2 in \cite{Muk:01} for an extensive discussion.

Note that a set meeting Beurling's condition will necessarily be a set of multiplicity \cite[p 81]{Mey:01} since every Radon measure is a Schwartz distribution. Salem \cite{Sa:01} proved that the answer to Beurling's question is in the affirmative by constructing for every $\gamma$ in the unit
interval, a {\it random} Radon measure $\mu$ (over a convenient probability space) whose support has Hausdorff dimension $\gamma$ and which satisfies his requirement with probability one.  By contrast, we should mention  interesting results on the construction of non-random Salem sets, as can be found in  \cite{bluhm:1}.

It follows from the results of this paper (see Theorem \ref{theorem:effectivebeurling}) that such sets can also be constructed by looking at Cantor type ternary sets $E$ with computable ratios
$\xi$ and then considering the image of $E$ under an algorithmically random Brownian motion.  Along these lines we shall be able to find, for every algorithmically random infinite binary string $\omega$ and  every computable real $\gamma$ in the unit interval,  a $\Pi_2^0(\omega, \gamma)$ compact set $S_\gamma(\omega)$ which is a Salem set of Hausdorff dimension $\gamma$. We emphasise that these sets are uniformly definable in $\gamma$ and $\omega$.

This means that \emph{each}  algorithmically random binary string $\omega$ will answer Beurling's question in the affirmative for \emph{all}  $\gamma$ which are computable. These sets can be uniformly constructed from each specific algorithmically random $\omega$  in a manner which is  $\Pi_2^0 $--definable  in $\omega$ and the dimension $\gamma$.

A Brownian motion on the unit interval is  {\it
algorithmically random} if it meets all effective (Martin-L\" of) statistical tests expressed in terms of
the statistical events associated with Brownian motion on the unit
interval.  The class of functions corresponds exactly, in  the language of Weihrauch \cite{wei:1,wei:2},  G\'acs \cite{gac:1} and specialised by Hoyrup and Rojas \cite{horo:1},  in the context of algorithmic randomness, to the Martin-L\" of random elements of the computable measure space 
\[ \mathcal{R}= (C_0[0,1],d,B,W),\] where $C_0[0,1]$ is the set of the continuous functions on the unit interval that vanish at the origin, $d$ is the metric induced by the uniform norm, $B$ is the countable set of piecewise linear functions vanishing at the origin with slopes and points of non-differentiability all
rational numbers, and where $W$ is the Wiener measure.
 
In this paper we shall show that if $E$ is an ``effectively closed" subset of the unit interval having Hausdorff dimension $\beta$ at most $1/2$, and if $\phi$ is an algorithmically random Brownian motion, then the image $\phi(E)$ is a Salem set of Hausdorff dimension $2 \beta$. Our proof involves a finitisation of Kahane's
arguments in Kahane \cite[p 253]{kah:1} together with a direct Kolmogorov complexity theoretical argument involving the incompressibility coefficient of an algorithmically random binary string \cite{Da:01}. We also rely  on the original characterisation of Asarin and Pokrovskii \cite{ap:1} of algorithmically random Brownian motions $\phi$ in terms of finitary zig-zag paths $x_n$ having high Kolmogorov complexity and which
approximate $\phi$ in a uniform and computable manner. We shall utilize  a Fourier analytic representation (a Franklin-Wiener series) of an  algorithmically random Brownian motion \cite{fo:2}. For the main theorem  we develop a suitable constructivisation of Frostman's potential theoretic approach to the theory of Hausdorff dimension. Another approach to Frostman's lemma in the context of algorithmic randomness can be found in Reimann \cite{Reimann}.

This paper could be viewed as a further contribution towards the understanding of the sample path properties of algorithmically random Brownian motion. For similar results the reader may consult Fouch\'e \cite{fo:1, fo:2, fo:4,fo:5}, Kjos-Hanssen and Nerode \cite{KHNe:1, KHNe:2}, Kjos-Hanssen and Szabados \cite{KHSZ:01}
and Potgieter \cite{Po:01}. More results on the {\it Fourier structure} of sample paths of algorithmically random Brownian motion can be found in \cite{fo:7}.

The paper is organised as follows: In Section 2, we discuss  preliminaries from Brownian motion,  fractal geometry and the theory of Kolmogorov complexity. We also introduce  ideas towards viewing Frostman's potential theory over compact subsets of the unit interval from the viewpoint of computable analysis. The main results are presented in Section 3. In Section 4, we discuss some Fourier analytical properties of a finite binary string with high Kolmogorov
complexity. The proof of the main result is given in Section 5 which is based on  key estimates (a finitisation of Kahane's work \cite{kah:1}) related to the even  moments of the Fourier transform of image measures of Brownian motion. These estimates are established in Section 6.
 
The results of this paper enable one to to define the an anologue of L\'evy local time for algorithmically random Brownian motion. This line of thought will be discussed in a sequel to this paper.

It would also be interesting to look at  at the beautiful results of \L aba and Pramanik \cite{LaPr:1} within the context of this paper.
Some results  along these lines can be found in \cite{fo:7}.
\subsection*{Acknowledgements}
Our research has  been supported by the European Union grant agreement Marie Curie Actions - 
People International Research Staff Exchange Scheme  
PIRSES-GA-2011- 294962  in Computable Analysis (COMPUTAL).   The first author has also been supported by the National Research Foundation (NRF) of South Africa and the second author is  supported by the Vision Keepers' programme of the University of South Africa (UNISA).

The  leader of the COMPUTAL-project  is  Dieter Spreen to  whom this paper is dedicated. We are privileged  to have him as  our colleague and we are in the pleasant position to look  forward to further collaboration and a fruitful exchange of ideas with him and as a natural consequence, with colleagues involved in the project.

Many thanks are due to the members  of the Mathematics Department of the Corvinus University, Budapest,  for hosting the first author on many occasions and for  their constructive and critical  discussions on the results in this paper. 

The authors would  like to thank Mr and Mrs de Wet for generously making
available their farm Olmuja in KwaZulu Natal for accommodating research weeks which
 contributed greatly to the development of the results in this paper.

We appreciate the incisive comments from the referees of this paper. 

\section{Preliminaries from Brownian motion, Kolmogorov Complexity and Geometric Measure theory}

\subsection{Brownian motion and complex oscillations}

 Suppose we are given a probability space $\left(\Omega,P,\mathcal{A}\right)$, and a set $E$. A \emph{random element} of $E$, or \emph{random object} is a mapping $X$ from $\Omega$ into $E$.  The usual problem is to consider a subset $B$ of $E$ and to determine the probability of the subset of $\Omega$  consisting of the $
\omega$  such that $X(\omega)\in B$.  Of course, this only makes sense when $X^{-1}(B)\in \mathcal{A}$.  Then, instead of $X^{-1}(B)$, or $\left\{\omega : X(\omega)\in B\right\}$, we simply write $\left[X \in B\right]$, and we can speak of the event: ``$X$ belongs to $B$''.  In a similar vein, when $X \in B$ is defined by a predicate $Q\left(X\right)$ on $E$, we write $\left[Q\left(X\right)\right]$ for $\left[X \in B\right]$ and refer to it as the event defined by the predicate $Q$.  If this event holds with probability one, we say that ``$Q\left(X\right)$ holds almost surely'' and write $Q\left(X\right)$ a.s.  An important case is when $E$ is a topological space.  We say that a mapping $X$ from $\Omega$ to $E$ is a \emph{a random variable in E} if $X^{-1}(G)\in \mathcal{A}$ for all open subsets $G$ of $E$.  This has the implication that for each $B$ of the $\sigma$-algebra $\mathcal{B}$ generated by the open set in $E$ (the ``\emph{Borel algebra''} of $E$), the set $X^{-1}(B)$ belongs to $\mathcal{A}$. The mapping $B \mapsto P\left(X^{-1}\left(B\right)\right)$ defines a probability measure $\nu_X$ on $\left(E,\mathcal{B}\right)$, which is called the \emph{distribution} of $X$.  Two random variables on a probability space are said to be \emph{statistically equivalent} when their distributions are the same. From the point of view of integration theory, the measures $P$ and $\nu_X$ are related as follows:  Suppose $f$ is a real-valued function on $E$ such that $f(X)$ is integrable with respect to $P$.  Then

     $$\int_{\Omega}f\left(X\left(\omega\right)\right)dP(\omega)=\int_{E}f(x)d\nu_{X}(x).$$
This is known as the \emph{change of variable formula}.  If $X$ is a real-valued random variable on $(\Omega,P,\mathcal{A})$ such that its distribution $\nu_{X}$ is absolutely continuous with respect to Lebesgue measure $\lambda$ and if $f$ is the Radon-Nikodym derivative of $\nu_{X}$ with respect to $\lambda$, we say that $X$ has a density and call $f$ the \emph{density function} of $X$.  In this case, for a Borel set $A$ of real numbers, we have:

$$P\left(X\in A\right)=\nu_{X}(A)=\int_{A}f(t)dt.$$

A random variable $X$ with mean $m$ and non-zero variance $\sigma^2$ is {\it normal} if it has a density function of the form 
\[\frac{1}{\sqrt{2\pi}\;\sigma}\;e^{-(t-m)^2/2\sigma^2}.\]

A Brownian motion on the unit interval is a real-valued function $(w,t) \mapsto X_w(t)$ on $\Omega \times [0,1]$, where $\Omega$ is the underlying space of some probability space, such that $X_w(0)=0$ a.s.\ and for $t_1 < \ldots < t_n$ in the unit interval, the random variables $X_w(t_1),X_w(t_2)-X_w(t_1), \cdots,
X_w(t_n)-X_w(t_{n-1})$ are statistically independent and normally distributed with means all $0$ and variances $t_1,t_2-t_1,\cdots,t_n-t_{n-1}$, respectively.
This means that the probability of a finitary event of the form $\left[X(t_{j})\in A_{j} \;\mbox{for}\; 1 \leq j \leq n\right],$ where $0<t_{1}<\ldots < t_{n}\leq 1$, will be given by the integral

 $$\int_{A_{1}}\ldots\int_{A_{n}}\prod^{n}_{j=1}\frac{1}{\sqrt{2\pi\left(t_{j}-t_{j-1}\right)}}\mbox{exp}
     \left[\frac{-\left(y_{j}-y_{j-1}\right)^{2}}{2\left(t_{j}-t_{j-1}\right)}\right]dy_{n}\ldots dy_{1},$$
where $t_{0}=0$ and $A_{1},\ldots, A_{n}$ are Borel subsets of the reals.
It is a fundamental fact that any Brownian motion has a ``continuous version''(see, for example \cite{fre:1}). This means the following:
 Write $C[0,1]$ for the set of real-valued continuous functions on the unit interval and $\Sigma$ for the $\sigma$-algebra of Borel sets of $C[0,1]$ where the latter is topologised by the uniform norm topology $$\|x\|_\infty = \sup_{0\leq t\leq 1}|x(t)|.$$ There is a unique probability measure $W$ on $\Sigma$
such that for $0\leq t_1 < \ldots <t_n \leq 1 $ and for a Borel subset $B$ of $\mathbf{R}^n$, we have
\[ P(\{w \in \Omega:(X_w(t_1), \cdots, X_w(t_n)) \in B \}) = W(A) ,\]
where
\[A =\{x \in C[0,1]: (x(t_1), \cdots, x(t_n)) \in B\}.\]
The measure $W$ is known as the {\it Wiener measure}. We shall usually write $X(t)$ instead of $X_w(t)$. 

The set of finite words over the alphabet $\{0,1\}$ is denoted by $\{0,1\}^*$.
 If $a \in \{0,1\}^*$, we write $|a|$ for the length of $a$. If $\alpha=\alpha_0\alpha_1\ldots$ is an infinite word over the alphabet $\{0,1\}$, we write $\overline{\alpha}(n)$ for the word $\prod_{j<n}\alpha_j$. 
We use the usual recursion-theoretic terminology $\Sigma_r^0$ and $\Pi_r^0$ for the arithmetical subsets of $\mathbb{N}^k \times \{0,1\}^{\mathbb{N}\times l},\;k,l \in \mathbb{N} \). 
(See, for example, \cite{hin:1}). We denote the Cantor space $\{0,1\}^\mathbb{N}$ by $\mathcal{N}$. We write $\lambda$ for the Lebesgue probability measure on $\{0,1\}^{\mathbb{N}}$. For a binary word $s$ of length $n$, say, we write $[s]$ for the ``interval'' $\{\alpha \in \{0,1\}^\mathbb{N}: \overline{\alpha}(n) = s \}$. 
 A sequence $(a_n)$ of real numbers converges {\it effectively} to $0$ as $n \rightarrow \infty$ if for some total recursive $f:\mathbb{N} \rightarrow \mathbb{N}$, it is the case that $|a_n| \leq (m+1)^{-1}$ whenever $n \geq f(m)$. A real number $\beta$ is said to be recursive (or computable) if there is an algorithm which yields, for every natural number $n$, two integers $p,q$ such that $|\beta - \frac{p}{q}| < \frac{1}{2^n}$.

If $f,g$ are positive real-valued functions on $\{0,1\}^*$, we write $f< ^+g$ to signify that there is an absolute positive constant $D$ such that $f(x) <g(x) +D$ for all values of $x$. By a slight but useful abuse of notation we shall sometimes write $f(x) <^+g(x)$ instead of $f<^+g$. 

Let $(\phi_e: e \geq 0)$ be an effective enumeration of all the partial recursive functions from from $\{0,1\}^*$ to $\{0,1\}^*$. Let $U$ be the partial recursive function given by 
$$U(0^{e-1}1\sigma)=\phi_e(\sigma),\; e \geq 1,\; \sigma \in \{0,1\}^*.$$

In this case, we call $0^{e-1}1\sigma$ a program (or description) for $U$ of the number $s=\phi_e(\sigma)$.
For $s$ a finite binary string, the {\it plain} $U$-descriptive  complexity of $s$,  denoted by  $C_U(s)$, is the length of the shortest program for $U$ which will output $s$.

 In the sequel we write $C(s)$ in stead of $C_U(s)$. The values of $C_U$ are independent of the choice of the effective enumeration of the partial recursive functions $(\phi_e)$ up to the relation $<^+$.  (See \cite[pp 75-79]{nie:1} for a thorough discussion.)

A  universal prefix-free machine $V$ is just a universal Turing machine with domain a prefix-free set. This means that for no two distinct words in the domain of $V$ can the one be an initial segment of the other.  We call $V$ optimal if $C_V <^+ C_V'$ for every universal prefix-free universal Turing machine $V'$. The construction of an optimal universal prefix-free Turing machine can be found in \cite[p 84]{nie:1}, for example. Let $V$ be an optimal  universal prefix-free Turing machine from $\{0,1\}^*$ to $\{0,1\}^*$. For $s$ a finite binary string, the prefix-free Kolmogorov complexity $ K_V(s)$ of $s$ is the length of the shortest program for $V$ which outputs $s$.  In the sequel, we fix $V$ and denote the associated Kolmogorov complexity by $K(s)$. For a machine-independent modulo $<^+$  characterisation of $K$, see Theorem 2.2.19 in \cite[p 90]{nie:1}.

It is well-known that $C$ and $K$ are related as follows:
\begin{equation}
C(x) <^+ K(x) <^+ C(x) + 2 \log(C(x)) <^+ C(x) + 2 \log(|x|).
\label{eq:plainversusprefixfree}
\end{equation}
(See, for example, p 94 in \cite{nie:1}.)

Recall that an infinite binary string $\alpha$ is Kolmogorov-Chaitin complex if 

\begin{equation}
\exists_d\; \forall_n\;K(\overline{\alpha}(n)) \geq n-d.
\label{eq:Kolmogorov}
\end{equation}
In the sequel, we shall denote the set of Kolmogorov-Chaitin binary strings by $KC$ and refer to its elements as $KC$-strings. The set $KC$ is independent of our choice of the the optimal universal prefix-free machine $V$. (See, e.g., \cite{ch:2}, \cite{Do:01}, \cite{marlo:1}, \cite{lev:1} or \cite{nie:1} for more background.)

%We next survey the results from \cite{ap:1}, \cite{fo:1} and \cite{fo:2} which will play an important role in this discussion.  

For $n \geq 1$, we write $C_n$ for the class of continuous functions on the unit interval that vanish at $0$ and are linear with slopes $\pm \sqrt{n}$ on the intervals $[(i-1)/n,i/n]\;,i=1,\ldots ,n$. With every $x \in C_n$, one can associate a binary string $a =a_1\cdots a_n$ by setting $a_i=1$ or $a_i =0$ according to whether $x$ increases or decreases on the interval $[(i-1)/n,i/n]$. We call the word $a$ the code of $x$ and denote it by $c(x)$. Conversely, any binary string $a =a_1\cdots a_n$ uniquely determines a function $x\in C_n$ and it will be denoted $x = c_*(a)$. 

The following notion was introduced by Asarin and Pokrovskii in \cite{ap:1}.
\begin{defi}

A sequence $(x_n)$ in $C[0,1]$ is \emph{complex}  if $x_n \in C_n$ for each $n$ and there is a constant $d > 0$ such that 
\begin{equation}
K(c(x_n)) \geq n-d
\label{eq:incompressibility}
\end{equation}
 for all $n$. A function $\phi  \in C[0,1]$ is a \emph{complex oscillation} if there is a complex sequence $(x_n)$ such that $\|\phi -x_n\|_\infty$ converges effectively to $0$ as $n \rightarrow \infty$.\
\label{definition:ap}
\end{defi}

The class of complex oscillations is denoted by $\mathcal{C}$. It is well-known that the class of complex oscillations has Wiener measure one. For another effective measure-theoretic characterisation of $\mathcal{C}$, the reader is referred to \cite{fo:1}.\\

REMARK. If $\varepsilon$ is a $KC$-string, then $(c_*(\bar{\varepsilon}(n)))$ is a complex sequence. One readily sees that $c_*(\bar{\varepsilon}(n))$ diverges in $C\left[0,1\right]$ \cite{fo:1}. Indeed, this result follows from the simple probabilistic observation to the effect that for $\alpha = \Pi\alpha_{j}$ in $\mathcal{N}$, the sequence $\left(\left(\alpha_{0}+\ldots+\alpha_{n}\right)/\sqrt{n+1}\right))$ diverges almost surely.  This follows, for example, from Khintchine's law of the iterated logarithm which, when applied to the probability space $(\mathcal{N},\mathcal{\lambda},\mathcal{B})$ (with $\mathcal{B}$ the Borel-algebra on $\mathcal\{0,1\}^\mathbb{N}$), states that
$$\limsup_{n\rightarrow \infty}\frac{\alpha_{0}+\ldots + \alpha_{n}}{\sqrt{2\left(n+1\right)\log \log\left(n+1\right)}}=1, $$
almost surely.  It then follows that all the $\Pi_{1}^{0}$ sets $B_{n}$ defined by
$$\alpha \in B_{n}\Longleftrightarrow  \forall_k \;\forall_\ell \left[k,\ell>n\Rightarrow||c_*(\bar{\alpha}\left(k\right))-c_*(\bar{\alpha}\left(\ell\right))||_\infty <1\right]$$
are of $\lambda$-measure 0 and, therefore, that no set $B_{n}$ can contain any $KC$-string. 

Here we have worked with a  Kurtz-test and invoked the results to be found on \cite[p 127] {nie:1} for example. What we required is the fact that no $\Pi_{1}^{0}$-set in $\mathcal{N}$ of $\lambda$-measure zero will contain any $KC$-string. Or equivalently, if a $\Pi_{1}^{0}$-event happens to contain a $KC$ element it is probabilistically significant to the extent that the event has non-zero probability.

In particular, for no $KC$-string $\epsilon$, will the initial segments $\overline{\epsilon}(n)$ define a complex oscillation via the operation $c_*$. Fortunately, there is a computational way of interpreting complex oscillations in terms of $KC$ (see Theorem \ref{theorem:isomorphism}  \cite{fo:2}), an observation which will play a central r\^ ole in this paper.\\  

In the sequel, if $\phi$ is a complex oscillation ($\phi \in \mathcal{C}$), and $d$ is a natural number such that for some complex sequence $x_n$  satisfying (\ref{eq:incompressibility}) converges effectively to $\phi$, we shall call $d$ an {\it incompressibility coefficient} of $\phi$.

  For recent refinements of the result of Asarin and Pokrovskii, the reader is referred to the work of  Kjos-Hanssen and Szabados \cite{KHSZ:01}.  They note  that Brownian motion and scaled,  interpolated simple random walks can be
jointly embedded in a probability space in such a way that almost surely, the $n$-step walk is, with respect to the uniform norm, 
within a distance $ O(n^{-\frac{1}{2} } \log n)$ of the Brownian path, for all but finitely many 
positive integers $ n$.  In the same paper, Kjos-Hanssen and Szabados show that, almost surely, their constructed  sequence $(x_n)$ of $n$-step walks is complex  in the sense of  Definition  \ref{definition:ap} and all Martin-L\" of random paths  (= complex oscillations)  have such
an incompressible close approximant. This strengthens a result of Asarin \cite{as:1}, who obtained
instead the bound $O(n^{-\frac{1}{6}} \log n)$. 

 The following result will play an important r\^ ole in this paper:

\begin{thm} \cite{fo:2}. There is a bijection $\Phi:KC \rightarrow\mathcal{C}$ and a  uniform algorithm that, relative to any $KC$-string $\alpha$, with input a dyadic rational number $t$ in the unit interval and a natural number $n$, will output the first $n$ bits of the the value of the complex oscillation $\Phi(\alpha)$ at  $t$. 
 \label{theorem:isomorphism}
\end{thm}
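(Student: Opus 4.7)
The plan is to define $\Phi$ via a Franklin--Wiener representation. Fix an orthonormal Schauder/Franklin system $\{F_n\}$ in $L^2[0,1]$ consisting of continuous piecewise linear functions with dyadic knots, normalised so that the formal series $B(t)=\sum_n \xi_n F_n(t)$ is a Brownian motion when the $\xi_n$ are i.i.d.\ $N(0,1)$. Using an effective pairing function, split the bits of $\alpha \in KC$ into countably many disjoint infinite sub-strings $\alpha^{(n)}$, and apply to each $\alpha^{(n)}$ a uniformly computable procedure that converts the binary expansion of a uniform$[0,1]$ variable into a standard normal sample (for instance via a computable approximation of the inverse standard normal CDF). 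This yields numbers $\xi_n = \xi_n(\alpha)$. Set $\Phi(\alpha)(t) := \sum_{n=0}^{\infty} \xi_n(\alpha) F_n(t)$.

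The next step is to prove $\Phi(\alpha) \in \mathcal{C}$. First, the Kolmogorov--Chaitin condition (\ref{eq:Kolmogorov}) on $\alpha$ forces each $\xi_n$ to satisfy $|\xi_n| = O(\sqrt{\log n})$: any faster growth would mean a short prefix of $\alpha^{(n)}$ already determines many bits of $\alpha$, contradicting incompressibility. Combined with the $L^\infty$-decay of the integrated Franklin functions on dyadic levels, this gives uniform convergence of the series, so $\Phi(\alpha)\in C_0[0,1]$. Next, truncate the series at the level whose basis functions are linear on the dyadic partition of order $N$, obtaining piecewise linear partial sums $S_N$; round $S_N$ in a canonical way to the nearest element $x_N \in C_{2^N}$. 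One then shows that the code $c(x_N)$ carries essentially all the information of the prefix of $\alpha$ used to manufacture $\xi_0,\dots,\xi_{N-1}$, so that $K(c(x_N))\geq 2^N - d$ for some constant $d$ depending only on the coding scheme. This makes $(x_N)$ a complex sequence in the sense of Definition \ref{definition:ap}, converging effectively and uniformly to $\Phi(\alpha)$; hence $\Phi(\alpha)$ is a complex oscillation.

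For injectivity and surjectivity, observe that the sequence $(\xi_n)$ can be recovered from $\Phi(\alpha)$ as $\xi_n=\langle \Phi(\alpha)',F_n\rangle$, and from the $\xi_n$ one recovers $\alpha^{(n)}$ by inverting the effective normal-coding step, hence $\alpha$ itself. To see the image is exactly $\mathcal{C}$, start with an arbitrary $\phi\in\mathcal{C}$ with incompressibility coefficient $d$, read off its dyadic increments (which is a uniformly computable operation from $\phi$), and reconstruct a binary sequence $\alpha$; the Asarin--Pokrovskii condition on an effectively convergent complex sequence approximating $\phi$ implies that these extracted bits satisfy (\ref{eq:Kolmogorov}), so $\alpha\in KC$ and $\Phi(\alpha)=\phi$. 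The uniform algorithm in the statement now falls out of the construction: to output the first $n$ bits of $\Phi(\alpha)(t)$ for a dyadic $t$, read a finite initial segment of each relevant $\alpha^{(j)}$ to approximate the finitely many $\xi_j$ that contribute at precision $2^{-n-O(1)}$, and evaluate the truncated series.

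The main obstacle is the complexity-transfer step: proving $K(c(x_N))\geq 2^N-d$. One must quantify exactly how many bits of each $\alpha^{(j)}$ are needed to reproduce $c(x_N)$ bit-for-bit, and show that the rounding from $S_N$ to $x_N \in C_{2^N}$ loses at most $O(1)$ complexity. Symmetrically, when recovering $\alpha$ from $\phi$, one must argue that the complexity of the codes of the approximating complex sequence for $\phi$ pushes through the inversion of the normal-coding procedure to yield the incompressibility bound (\ref{eq:Kolmogorov}) for $\alpha$. Apart from this accounting, the rest of the argument is bookkeeping about effective uniform convergence of the Franklin series and the standard fact that the inverse of the normal CDF is computable in polynomial time.
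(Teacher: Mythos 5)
The paper itself does not prove Theorem \ref{theorem:isomorphism}; it is cited directly from Fouch\'e's 2000 \emph{Advances in Mathematics} paper \cite{fo:2} and used as a black box. The introduction only remarks that \cite{fo:2} rests on a Franklin--Wiener representation of algorithmically random Brownian motion, so at the level of overall strategy your proposal is aligned with the cited source. There is, however, no in-paper proof to compare against line by line.

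Judged on its own merits, the proposal has real gaps. The ``complexity transfer'' step $K(c(x_N)) \geq 2^N - d$ is not bookkeeping --- it is the crux of the theorem. The rounding from the Franklin--Wiener partial sum $S_N$ to a zigzag $x_N \in C_{2^N}$ is a many-to-one map that discards the magnitudes of the increments and keeps only their signs; you give no argument that these sign bits already carry near-maximal complexity relative to the bits of $\alpha$ consumed, nor that the rounding loses only $O(1)$. A second genuine gap is surjectivity: you extract $\xi_n$'s from an arbitrary complex oscillation $\phi$, push them back through the normal-coding map, and assert the resulting $\alpha$ is in $KC$, but this silently presupposes that every Asarin--Pokrovskii complex oscillation is Martin-L\"of random for Wiener measure, which is itself a nontrivial theorem rather than a consequence of your construction. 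Your reasoning for $|\xi_n| = O(\sqrt{\log n})$ (``a short prefix of $\alpha^{(n)}$ already determines many bits of $\alpha$'') is not correct as stated; the bound should come from an effective Borel--Cantelli / Solovay-test estimate on normal tails, together with preservation of Martin-L\"of randomness under computable measure-preserving maps and van Lambalgen's theorem to handle the split of $\alpha$ into substrings --- none of which you invoke. Finally, $\xi_n = \langle \Phi(\alpha)', F_n\rangle$ is formal only, since Brownian paths are nowhere differentiable; the honest recovery of $\xi_n$ is from second differences of $\Phi(\alpha)$ at dyadic points.
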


 Fouch\'e \cite{fo:4} proved that every complex oscillation $\phi$ has the following modulus of continuity:
  For any $C>1$, and for all sufficiently small values of $h$, 
    \begin{eqnarray} \label{aasdaadaga}
    \sup_{t\in [0, 1]} \vert \phi(t +h) - \phi(t)\vert \leq \sqrt{2 C \vert h \vert \log(1/\vert h \vert)}.
    \end{eqnarray}
As has been noted before, Kjos-Hanssen and Szabados \cite{KHSZ:01} showed that for any  complex oscillation $\phi$ there exists a complex sequence $(x_n) $ in $C[0, 1]$ and a constant $C$ such that, for $n$ sufficiently large,
\begin{eqnarray} \label{123dadgasa}
\Vert \phi - x_n\Vert _\infty\leq \frac{C \log n}{\sqrt{n}}. \end{eqnarray}

\subsection{Hausdorff and Fourier dimensions} \label{dadaer3tee}
If $(X,d)$ is a metric space, a {\it regular} Borel measure $\mu$ on $X$ is a Borel measure with the property that for every subset $A$ of $X$, there is a Borel set $B$ containing $A$ such that $\mu(B)=\mu^*(A)$. Here $\mu^*$ is the outer measure associated with $\mu$, in other words, writing $\mathcal{B}$ for the Borel algebra on $X$, 
$$\mu^*(A) = \inf \{\mu(B): A \subset B \in \mathcal{B}\}, \; A \subset X.$$ A measure on the Euclidean space $\mathbf{R}^d$ is a {\it Radon measure} if it is a  regular Borel measure and assumes finite values on compact subsets of $\mathbf{R}^d$, i.e., if it is locally finite. On a general complete separable metric space (a Polish space),  a measure is Radon iff it is locally finite and Borel regular.

If $X$ is a Polish space, and $\mu$ is a Radon measure on $X$, the support of $\mu$, denoted by $\mbox{supp}(\mu)$, is the complement of the set of points of $x \in X$ such that $\mu$ vanishes on some neighbourhood of $x$. Hence the support of $\mu$ is the smallest closed set $F$ such that $\mu(F^c)=0$, where $F^c$ denotes the complement of $F$ in $X$.

If $X$ and $Y$ are Polish spaces, $f$ is a Borel mapping from $X$ to $Y$, and $\mu$ is a Radon measure on $X$, the pushout measure $f_*\mu$ is given by 
$$f_*\mu(A) = \mu(f^{-1}(A)),$$ for Borel sets $A$ in $Y$. It is known that if $f$ is continuous and  the Radon measure $\mu$ has compact support then $f_*\mu$ too is a Radon measure. Moreover
$$\mbox{supp}f_*\mu = f(\mbox{supp}(\mu)).$$
For any Borel function $g:Y \rightarrow \mathbf{C}$, the change of variable formula is given by $$\int_Yg(y)df_*\mu(y)= \int_X g(f(x))d\mu(x).$$

We shall frequently apply this in the following way: Let $X:[0,1] \rightarrow \mathbf{R}$ be a continuous function and let $\mu$ be a non-zero Radon measure on $[0,1]$. If we write $\nu$ for the pushout of $\mu$ under $X$, then for all reals $\xi$  

\begin{equation}
 \int_\mathbf{R} e^{i s \xi} d\nu(s) = \int_0^1 e^{iX(t)\xi }d\mu(t).
\label{eq:fourier}
\end{equation}
%If $X$ is a random function then the pushout measure is a random Radon measure. 
This formula expresses the Fourier transform $\hat\nu(\xi)$ in terms of $X$ and $\mu$.

For a compact subset $E$ of Euclidean space $\mathbf{R}^d$ and real numbers $\alpha,\epsilon$ with $0 \leq \alpha < d$ and $\epsilon > 0$, consider all coverings of $A$ by balls $B_n$ of diameter $\leq \epsilon$ and the corresponding sums \[\sum_n|B_n|^\alpha,\]
where $|B|$ denotes the diameter of $B$. All the metric notions here are to be understood in terms of the standard $\ell^2$ norms on Euclidean space. The infimum of the sums over all coverings of $E$ by balls of diameter $\leq \epsilon$ is denoted by $H_\alpha^\epsilon(E)$. When $\epsilon$ decreases to $0$, the
corresponding $H_\alpha^\epsilon(E)$ increases to a limit (which may be infinite). The limit is denoted by $H_\alpha(E)$ and is called the Hausdorff measure of $E$ in dimension $\alpha$ or $\alpha$-Hausdorff measure of $E$.

If $0 < \alpha < \beta \leq d$, then, for any covering $(B_n)$ of $E$,
\[\sum_n|B_n|^\beta \leq \sup_n|B_n|^{\beta-\alpha}\sum_m|B_m|^\alpha,\]
from which it follows that
\[ H_\beta^\epsilon(E) \leq \epsilon^{\beta - \alpha}H_\alpha^\epsilon(E).\]
Hence if $H_\alpha(E) < \infty$, then $H_\beta(E)=0$. Equivalently,
\[ H_\beta(E) > 0 \Longrightarrow H_\alpha(E)=\infty.\]
Therefore,
\[ \sup\{\alpha:H_\alpha(E)=\infty\}=\inf\{\beta:H_\beta(E)=0\}.\]
This common value is called the Hausdorff dimension of $E$ and denoted by $\dim_h E$.

If $\alpha$ is such that $0 < H_\alpha(E) < \infty$, then $\alpha = \dim_h E$. However, $\alpha = \dim_h E$ does not necessarily imply $0 < H_\alpha(E) < \infty$. 

% It is easy to check that $E \mapsto H_\alpha(E)$ defines an outer measure which is invariant under translations and rotations, and homogeneous of %degree $\alpha$ with respect to dilations.

%In the definition we used only the structure of $E$ as a metric space. Therefore, the notions $H_\alpha(E)$ and $\dim_h E$ make sense for any metric %space $E$.

\begin{prop}
{\bf(Frostman's lemma)} \cite{frostman:1}. If $E$ is a compact subset of $\mathbf{R}^d$, then $H_\alpha(E) > 0$ if and only if  $E$ carries a nonzero Radon measure $\nu$ such that 
\[\nu(B) \leq C|B|^\alpha,\]
for all balls $B$ in $\mathbf{R}^d$ and some constant $C > 0$. 
\label{frostman}

\end{prop}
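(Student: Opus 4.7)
The proof splits into the easy direction and the hard direction, and I would handle the easy direction first as a warm-up. Suppose $\nu$ is a nonzero Radon measure supported in $E$ with $\nu(B)\leq C\vert B\vert^\alpha$ for every ball $B$. Given any covering of $E$ by balls $(B_n)$ with $\vert B_n\vert\leq \epsilon$, countable subadditivity yields $\nu(E)\leq \sum_n \nu(B_n)\leq C\sum_n\vert B_n\vert^\alpha$. Taking the infimum over such coverings gives $\nu(E)\leq C\cdot H_\alpha^\epsilon(E)$, and letting $\epsilon\downarrow 0$ we obtain $H_\alpha(E)\geq \nu(E)/C>0$. (A standard remark is that a ball of diameter $\delta$ and a set of diameter $\delta$ differ only by a constant depending on the ambient dimension, so the mass condition is equivalent to the one phrased via arbitrary sets, which lets us run the Hausdorff covering estimate directly.)

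The substantive direction assumes $H_\alpha(E)>0$ and constructs the measure by the classical dyadic cascade. First enclose $E$ in a half-open cube $Q_0$, and for each $k\geq 0$ let $\mathcal{D}_k$ denote the dyadic subcubes of $Q_0$ of side $2^{-k}$. Fix a large scale $n$ and initialise a measure $\nu_n^{(n)}$ by placing, on each $Q\in\mathcal{D}_n$ that meets $E$, a multiple of Lebesgue measure of total mass $2^{-n\alpha}$. Then one performs a bottom-up pruning: working through the levels $k=n-1,n-2,\ldots,0$, each time a dyadic cube $Q\in\mathcal{D}_k$ satisfies $\nu_n^{(k+1)}(Q)>2^{-k\alpha}$, one rescales the restriction of $\nu_n^{(k+1)}$ to $Q$ by the factor $2^{-k\alpha}/\nu_n^{(k+1)}(Q)$ to produce $\nu_n^{(k)}$. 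The resulting measure $\nu_n:=\nu_n^{(0)}$ is supported in the $2^{-n}$-neighbourhood of $E$ and satisfies $\nu_n(Q)\leq 2^{-k\alpha}$ for every dyadic cube $Q\in\mathcal{D}_k$ with $k\leq n$. Since any ball $B$ of diameter $\delta$ is covered by at most $D_d$ dyadic cubes of diameter $\leq 2\delta$ (for an absolute constant $D_d$ depending only on $d$), we obtain $\nu_n(B)\leq C\vert B\vert^\alpha$ with $C$ independent of $n$.

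The delicate step is the lower bound on total mass. The key observation is that after the pruning, every point of $E$ lies in some dyadic cube $Q$ at which equality $\nu_n(Q)=2^{-k(Q)\alpha}=\vert Q\vert^\alpha/c_d$ was enforced (otherwise the dyadic ancestors at levels above that point contributed full mass and one reaches a maximal such cube). The collection of maximal such cubes forms a disjoint cover of $E$ by dyadic cubes, and summing gives $\nu_n(E)\geq c_d\sum(\text{diam})^\alpha\geq c_d H_\alpha^{2^{-n}}(E)\geq c_d H_\alpha(E)>0$, uniformly in $n$. Because all $\nu_n$ are supported in a fixed compact neighbourhood of $E$ and have uniformly bounded total mass, I would extract a weak-$*$ convergent subsequence $\nu_{n_j}\to \nu$. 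The mass condition $\nu_n(B)\leq C\vert B\vert^\alpha$ passes to the limit on open balls and hence on all balls by outer regularity, and the support of $\nu$ lies in $E$ because $E$ is compact and $\mathrm{supp}(\nu_n)$ shrinks into any open neighbourhood of $E$ for large $n$. The lower bound $\nu(E)\geq c_d H_\alpha(E)>0$ shows that $\nu$ is nonzero, completing the construction.

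The main obstacle I expect is precisely the lower bound argument: one must verify rigorously that after the bottom-up rescaling the ``saturated'' dyadic cubes cover $E$ and that their contributions estimate the Hausdorff pre-measure from above by the total mass of $\nu_n$. This is where the covering definition of $H_\alpha$ must be compared with a covering by dyadic cubes, requiring the elementary observation that any set of diameter $\delta$ meets a bounded number of dyadic cubes of side comparable to $\delta$, so that restricting to dyadic coverings changes $H_\alpha^\epsilon(E)$ only by a dimensional constant. Everything else (weak-$*$ compactness, passing the ball estimate to the limit, compactness of support) is standard once this combinatorial core is in place.
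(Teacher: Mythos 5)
The paper itself offers no proof of Frostman's lemma; it states the result and cites Frostman's original 1935 paper, treating it as background. So I can only evaluate your argument on its own merits, and there it contains one genuine error in the decisive step, even though the overall dyadic-cascade strategy is the standard and correct one.

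The mass lower bound is the problem. You write
\[
\nu_n(E)\ \geq\ c_d\sum_{Q \text{ max.\ saturated}}(\operatorname{diam}Q)^\alpha\ \geq\ c_d\,H_\alpha^{2^{-n}}(E)\ \geq\ c_d\,H_\alpha(E),
\]
and both of the last two inequalities are false as written. The maximal saturated cubes are precisely the ones \emph{highest} in the tree — at whatever level the last (top-most) rescaling fired, possibly level $0$ — so their diameters range anywhere from $\sqrt{d}\,2^{-n}$ up to $\operatorname{diam}Q_0$. They are therefore not a covering of $E$ by sets of diameter $\leq 2^{-n}$, and what $\sum(\operatorname{diam}Q)^\alpha$ dominates is the Hausdorff \emph{content} $H_\alpha^{\infty}(E)$ (the infimum over coverings with no diameter restriction), not $H_\alpha^{2^{-n}}(E)$. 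The final inequality is also reversed: $H_\alpha^{\delta}(E)$ is non-increasing in $\delta$, so for every $\delta>0$ one has $H_\alpha^{\delta}(E)\leq H_\alpha(E)$, never $\geq$. The correct conclusion of this step is $\nu_n(Q_0)\geq c_d\,H_\alpha^{\infty}(E)$, and you then need the separate (elementary but not free) fact that $H_\alpha^{\infty}(E)>0$ if and only if $H_\alpha(E)>0$: given $H_\alpha^{\infty}(E)<\varepsilon$, a witnessing covering automatically has all diameters $<\varepsilon^{1/\alpha}$, hence bounds $H_\alpha^{\delta}(E)<\varepsilon$ for every $\delta\geq\varepsilon^{1/\alpha}$, and letting $\varepsilon\downarrow 0$ forces $H_\alpha(E)=0$. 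With this repair — replace $H_\alpha^{2^{-n}}$ by $H_\alpha^{\infty}$ and invoke the content equivalence — the lower bound is uniform in $n$ and the rest of your argument (ball estimate via dyadic covering, weak-$*$ compactness, outer-regular passage of the mass bound to closed balls, shrinking of supports into $E$) goes through. The easy direction and the subadditivity computation there are correct.
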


Let $E$ be a non-empty closed subset of the unit interval such that $\dim_h E \geq \alpha$. For $\beta < \alpha$, let $\nu$ be a non-zero Radon measure such that for some constant $C > 0$,
\begin{equation}
\nu(I) \leq  C|I|^\beta
\label{eq:massdistribution}
\end{equation}
 for all subintervals of the unit interval. It is the case that $\nu$ can be chosen such that 
$\nu(I) \leq |I|^\beta$ for all dyadic subintervals of the unit interval and $C = 3$ in (\ref{eq:massdistribution}). 

With the pair $(E ,\nu)$, we associate a directed tree $T=T(E)$ and a flow $f$ thereon as follows: The vertex set $V$ of $T$ consists of all dyadic intervals $I$ such that $E \cap I \neq
\emptyset$. Denote these intervals as $I_{j,n} = [(j-1)/N, j/N)$ for $1\leq j \leq N-1$ and $I_{j,n} = [(j-1)/N, j/N]$ for $j = N$ where $N = 2^n$. The edges are of the form $J_1 \rightarrow J_2$ where $J_1,J_2$ are in $V$ and $J_2$ is obtained upon dissecting $J_1$ (that is, if
$J_1 = I_{j,n}$, then $J_2 = I_{j,n+1}$ or $J_2 = I_{j+1, n+1}$). Write $\mathcal{E}$ for the set of edges of $T$. Define a function $f:\mathcal{E} \rightarrow \mathbf{R}^+$ by setting $f(J_1 \rightarrow J_2)=\nu(J_2)$. For $v \in V$, write $B_v$ for the set of edges emanating from $v$. Since $\nu$ is (finitely)
additive, it follows that $f$ is a {\it flow} on the tree. This means that for every $v \in V$ (such that $v$ is not the root), if we write $e_v$ for the edge ending in $v$, then, for all vertices $v$
\begin{equation}
 f(e_v)=\sum_{e \in B_v} f(e).
\label{eq:flow}
\end{equation}
Moreover, if $v$ is the dyadic interval of length $\frac{1}{N}$, then
\begin{equation}
 f(e_v) \leq \frac{1}{N^{\beta}},
\label{eq:flowbound}
\end{equation}

Conversely for a given compact set $E$, suppose that there is a flow $f$ on $T(E)$ satisfying (\ref{eq:flow}) and (\ref{eq:flowbound}). Then there is a non-zero Radon measure $\nu$ supported by $E$ such that (\ref{eq:massdistribution})
holds. To see this, for a dyadic interval $I$, set $\nu(I)=f(e)$, when $I$ is a vertex of the tree and $e$ is the edge ending in $I$, and set $\nu(I)=0$ if $I$ is not an edge, that is, if $I$ is disjoint from $E$. The flow-condition ensures that $\nu$ is countably additive on the semi-algebra of dyadic intervals. It
can therefore be extended to the Borel $\sigma$-algebra on the unit interval. It is quite readily seen  that for an arbitrary closed interval $I$ of the unit interval,
\[\nu(I) \leq 3|I|^\beta.\]
(This follows the fact for any interval $I \subset [0, 1]$, there exist 3 dyadic intervals $K_1, K_2, K_3$ of $[0, 1]$ of equal length $\leq |I|$ such that $I \subset K_1 \cup K_2 \cup K_3$.)

The idea of presenting Radon measures on the reals as flows on directed trees belongs to the folklore of fractal geometry. The authors learnt these ideas from the book by  M\"orters and Peres \cite{MoPe:01}. 
 
Denote by $\mathbb{D}$ the set of pairs $(j, n), \, 1 \leq j \leq N = 2^n$ ($j, n$ integers). %For any $d = (j, 2^n) \in \mathbb{D}$, denote $I_d = [(j-1)/2^n, j/2^n)$. 
To say that the function $(j, n) \mapsto \nu(I_{j,n})$ (defined on $\mathbb{D}$) is computable is equivalent to requiring that the flow $v \mapsto f(e_v)$ on the vertex set of the tree $T(E)$ associated with $(E,\nu)$ is computable. It is an interesting problem to identify conditions on a set $E$ of a given Hausdorff dimension to ensure the existence of a computable flow on the tree associated with $E$ that will provide computable measures $\nu$ witnessing the validity of Frostman's lemma. As shown in \cite{MoPe:01}, this problem is related to finding computable versions of the mincut-maxflow theorem on countable trees. We shall discuss a large class of perfect sets (Cantor sets) for which such constructive flows can be found.

For a compact set $E$ and a measure $\nu$ satisfying the conclusion of Proposition \ref{frostman}, let 
$$\nu_n = \sum_{j=1}^{N} \nu(I_{j,n})\delta_{j/N},\,\,\,(N = 2^n)$$
where $\delta_{j/N}$ is the Dirac measure concentrated at $j/N$. (We also frequently invoke Riesz's representation theorem and thus also think of  of Radon measures on a  compact Hausdorff space as positive linear functionals on the space $X$.)
The measures 
$\nu_n$ converge weakly to $\nu$ in the sense that for all $f \in C(E)$, the Banach space of continuous functions on $E$ with the uniform norm, it is the case that 
$$ \lim_{n \rightarrow \infty}\nu_n(f) =\nu(f),$$
for all $f \in C(E)$.

 The discrete measure $\nu_n$ will be called the $n$-approximation of $\nu$. It is clear that for any subinterval $I$ of $[0, 1]$, it is the case that
  \begin{eqnarray} \label{eadagasdw}
\nu_n(I) \leq 1/N^\alpha \mbox{ for } |I| < 1/N \mbox{ and } \nu_n(I) \leq 3 |I|^{\alpha} \mbox{ for } |I| \geq 1/N.
\end{eqnarray}
Indeed if $|I| < 1/N$, then $I \subset I_{j,n}$ for some $j$ and hence $$\nu_n(I) \leq \nu_n(I_{j,n}) = \nu(I_{j,n}) \leq |I_{j,n}|^\alpha = 1/N^\alpha.$$ For $|I|\geq 1/N$, consider 3 dyadic intervals $K_1$, $K_2$, $K_3$ of equal length $ r \leq |I|$ and $r \geq 1/N$ such that $I \subset K_1 \cup K_2 \cup K_3$. Then 
\begin{eqnarray*}
\nu_n(I)\leq \nu_n(K_1 \cup K_2 \cup K_3) & = &\sum_{j=1}^{N} \nu(I_{j,n})\delta_{j/N}(K_1 \cup K_2 \cup K_3)\\
&\leq & \nu_n(K_1) + \nu_n(K_2) + \nu_n(K_3)\\
&= & \nu(K_1) + \nu(K_2) + \nu(K_3)\\
&  \leq & |K_1|^{\alpha} + |K_2|^{\alpha}  +|K_3|^{\alpha} \\
& \leq&  3 |I|^\alpha.
\end{eqnarray*}

\begin{defi} Let $0 < \alpha < 1$. We call a Radon (probability)
 measure on the unit interval an $\alpha$-\emph{Frostman measure} if 
for some constant $C >0$, we have $\nu(I) \leq C|I|^\alpha$ for all dyadic intervals $I$ contained in the unit interval.% This implies that  $\nu(I) \leq  3 |I|^\alpha$ for %any general interval $I$. 
\end{defi}
This has the implication that $\nu(I) \leq  3C |I|^\alpha$ for any general interval $I$. 
 Recall that  $\mathbb{D}$ the set of pairs $(j, n), \, 1 \leq j \leq 2^n$ . For any $d = (j, n) \in \mathbb{D}$, denote $I_d = [(j-1)/2^n, j/2^n)$. 

\begin{defi}
An $\alpha$-Frostman measure $\nu$ is called an \emph{effective} $\alpha$-Frostman measure if the function $$\mathbb{D} \rightarrow \mathbf{R}, \,\,d \mapsto \nu(I_d) $$ is computable. 
\end{defi}
This means that one can effectively and uniformly find, from $d \in \mathbb{D}, m\in \mathbb{N}$, some rational $r$ such that $\vert \nu(I_d) - r \vert \leq 1/2^m.$

\begin{defi}
Let $E$ be a compact subset of Hausdorff dimension $\beta > 0$. We say that the Hausdorff dimension of $E$ is \emph{ effectively witnessed}, if for each rational $0 < \alpha < \beta$, there is an effective $\alpha$-Frostman measure which supports $E$.
\end{defi}

 Cantor ternary sets $C_\xi$ with a computable ratio $0< \xi <1/2$ give examples of  compact subsets of the unit interval which effectively witness the Hausdorff dimensions $$\beta = \log 2/\log(1/\xi)$$ of these sets. We recall that the sets are  constructed in the following way. Start from the interval $[0, 1]$, remove an open interval of length $1- 2\xi$ in the middle of
the original interval $[0, 1]$, then from each of the two remaining intervals, remove the interval of length $\xi(1- 2\xi)$ in the middle. At the $n$th step, there are $2^n$ closed intervals of common length $\xi^n$ and each of these generates two subintervals of length $\xi^{n+1}$ by removing an open interval of
length $\xi^n(1-2 \xi)$ in the middle. Denote by  $\mathcal{A}_n$ the set of the $2^n$ intervals that have survived up to  stage $n$. Each of these intervals has length $\xi^n$. By definition,  
\begin{equation}
C_\xi = \bigcap_{n=1}^\infty \bigcup_{J \in \mathcal{A}_n} J.
\label{equation:cantor}
\end{equation}
 It is well-known that the $\beta$-Hausdorff measure of $C_\xi$ is $1$. This has the implication that $$\dim_hC_\xi =\beta.$$
One can construct a non-zero Radon measure
$\nu$ supported by $C_\xi$ such that $$\nu(J) = 1/2^n=|J|^\beta,$$ for each $J \in \mathcal{A}_n$. 

It is well-known that
$$\nu(I) \leq C |I|^\beta,$$
for all intervals $I$ and a constant $C$. (See, for example \cite[p 62]{Ma:01}.)

Given a dyadic interval $I$ and a natural number $n \geq 1$, since $\xi$ is computable, we can effectively compute the number $L$ of intervals $J \in \mathcal{A}_n$ which intersect $I$. Note that
$$\left|\nu(I) - \frac{L}{2^n}\right| \leq \frac{2}{2^n}.$$ We conclude that $\nu$ is an effective $\beta$-Frostman measure and that the Hausdorff dimension $\beta$ of $C_\xi$ is effectively witnessed.

For $0 < \xi <\frac{1}{4}$ set $\gamma =2\beta$ where $\beta$ is the Hausdorff dimension of $C_\xi$. Note that $\gamma$ is computable iff $\xi$ is and will range over all the computable reals in the (open) unit interval as $\xi$ ranges over the computable reals in $(0,\frac{1}{4})$. For $\omega \in KC$ and a computable $\xi$ in the unit interval set
\begin{equation}
S_\gamma(\omega)= \Phi(\omega)(C_\xi),\; \gamma =2 \frac{\log2}{\log\frac{1}{\xi}},
\label{equation:theset}
\end{equation}
 where $\Phi(\omega)$ is the complex oscillation asssociated with $\omega$. (See Theorem \ref{theorem:isomorphism}.)

Note that unfolding (\ref{equation:cantor}) yields
\begin{equation}
x \in S_\gamma(\omega) \Leftrightarrow  \forall_n\exists_{y \in \mathbb{D}_0}\exists_{J \in \mathcal{A}_n} \;(y \in J ) \;\wedge |\Phi(\omega)(y) - x| < \frac{1}{2^n},
\label{equation:salemfromkc}
\end{equation}
where $\mathbb{D}_0$ denotes the set of dyadic numbers in the unit interval. 

 It follows from Theorem \ref{theorem:isomorphism} and the fact that $\gamma$ hence $\xi$ is computable that the sets $S_\gamma(\omega)$ are $\Pi_2^0(\gamma, \omega)$ definable over $(0,	1)_r \times KC$, where $(0,1)_r$ denotes the set of computable real numbers 

For a Radon measure $\mu$ on $\mathbf{R}^d$  with compact support set
\[I_\alpha(\mu) =\int_{\mathbf{R}^d}\int_{\mathbf{R} ^d}\frac{d\mu(x)d\mu(y)}{|x-y|^\alpha}.\]
We say that $\mu$ has finite energy with respect to $|x|^{-\alpha}$ when $I_\alpha(\mu) < \infty$. If $E$ carries positive measures of finite energy with respect to $|x|^{-\alpha}$ we say that $E$ has positive capacity with respect to $|x|^{-\alpha}$ and we write 
\[\mbox{Cap}_\alpha(E) > 0.\]
If $E$ carries no positive measure of finite energy with respect to $|x|^{-\alpha}$, we say that $E$ has capacity zero with respect to this kernel and we write $\mbox{Cap}_\alpha(E)=0$.
(See, for example Chapter 10 in \cite[pp 132-134]{kah:1} for more details on capacities.)

It follows from the Fourier analysis of Schwartz distributions that
\begin{equation}
I_\alpha(\mu)=C(\alpha,d)\int_{\mathbf{R}^d}|\hat\mu(u))|^2|u|^\alpha\frac{du}{|u|^d},
\label{equation:energy}
\end{equation}
when $0 < \alpha < d$, where $C(\alpha,d)$ is a positive constant and where,  moreover,
\[\hat\mu(u)=\int_{\mathbf{R}^d}e^{isu}d\mu(s),\]
the Fourier transform of the measure $\mu$.
For more on the Fourier analysis of Radon measures, see, for example, \cite[162-163]{Ma:01}. %An even more detailed and self-contained proof of (\ref{equation:energy}) can be found in [].

The following result is well-known. (See, for example \cite[p 133]{kah:1}.)
\begin{prop}
For a compact subset $E$ of $\mathbf{R}^d$ and $0 < \alpha < \beta < d$,
\[H_\beta(E) > 0 \Rightarrow \mbox{\emph {Cap}}_\alpha(E)>0 \Rightarrow H_\alpha(E) > 0, \]
and \[\sup\{\alpha:I_\alpha(\mu) < \infty\} = \sup\{\alpha:\mbox{\emph {Cap}}_\alpha(E) > 0\} = \mbox{\emph{dim}}_h E.\]
%or, equivalently,
%\[\sup\{\alpha:I_\alpha(\mu) < \infty\} = \mbox{\emph{dim}}_h E.\]
\label{proposition:hausdorfcapacity}
\end{prop}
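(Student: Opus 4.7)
The plan is to prove the two implications separately and then derive the equality of suprema as a formal consequence, invoking Frostman's lemma (Proposition \ref{frostman}) as the main input to the first implication.

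For the implication $H_\beta(E) > 0 \Rightarrow \mbox{Cap}_\alpha(E) > 0$, I would first apply Frostman's lemma to obtain a non-zero Radon measure $\nu$ supported on $E$ with $\nu(B(x,r)) \leq C r^\beta$ for all balls. Then I would estimate the inner potential using the layer-cake identity
\[
\int_{\mathbf{R}^d} \frac{d\nu(y)}{|x-y|^\alpha} = \alpha \int_0^\infty \nu(B(x,r)) r^{-\alpha-1}\, dr,
\]
split the integral at the diameter $D$ of the support of $\nu$, bound $\nu(B(x,r)) \leq C r^\beta$ on $[0,D]$ and by $\nu(\mathbf{R}^d)$ beyond, and use $\beta - \alpha > 0$ to conclude finiteness. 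Integrating in $x$ then yields $I_\alpha(\nu) < \infty$, hence $\mbox{Cap}_\alpha(E) > 0$.

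For the implication $\mbox{Cap}_\alpha(E) > 0 \Rightarrow H_\alpha(E) > 0$, I would suppose $\nu$ is a non-zero Radon measure supported on $E$ with $I_\alpha(\nu) < \infty$. Since the potential $U^\nu(x) = \int |x-y|^{-\alpha}\, d\nu(y)$ is in $L^1(\nu)$, it is finite $\nu$-a.e., and so there exists a compact subset $F \subset E$ with $\nu(F) > 0$ on which $U^\nu$ is uniformly bounded by some constant $M$. The restricted measure $\nu' = \nu|_F$ then satisfies, for every $x_0 \in F$ and every $r > 0$,
\[
r^{-\alpha} \nu'(B(x_0,r)) \leq \int_{B(x_0,r)} |x_0-y|^{-\alpha}\, d\nu'(y) \leq M,
\]
so $\nu'(B(x_0,r)) \leq M r^\alpha$. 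The mass distribution principle (reducing an arbitrary covering of $F$ to one by balls of comparable diameter centered in $F$ via a standard $5r$-type argument) now gives $H_\alpha(F) \geq \nu'(F)/M' > 0$ for some constant $M'$, hence $H_\alpha(E) > 0$. The main obstacle here is the careful passage from an arbitrary covering to one centered on $F$; this is where I would expect to spend the most care.

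Finally, the chain of equalities follows formally. By the definition of capacity, $\sup\{\alpha : I_\alpha(\mu) < \infty \text{ for some nonzero Radon } \mu \text{ on } E\} = \sup\{\alpha : \mbox{Cap}_\alpha(E) > 0\}$. Using that $H_\beta(E) = \infty$ for every $\beta < \dim_h E$ (the defining phase transition of Hausdorff measure), the first implication applied to any $\alpha < \beta < \dim_h E$ gives $\mbox{Cap}_\alpha(E) > 0$, so the capacity supremum is $\geq \dim_h E$; conversely, the second implication shows that $\mbox{Cap}_\alpha(E) > 0$ forces $H_\alpha(E) > 0$, hence $\alpha \leq \dim_h E$, proving the reverse inequality. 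Combining these yields the claimed common value $\dim_h E$.
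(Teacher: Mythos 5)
Your proposal is correct and follows the standard potential-theoretic argument, which is essentially what the cited reference (Kahane, \emph{Some random series of functions}, p.~133) gives: Frostman's lemma plus a layer-cake bound for the forward implication, and Chebyshev-style restriction to a set of bounded potential plus the mass distribution principle for the converse. The chain of suprema is then a formal consequence exactly as you derive it.

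One small remark on the second implication: the ``$5r$-type covering argument'' you anticipate as the delicate step is not actually needed. Once you know $\nu'\bigl(B(x_0,r)\bigr) \leq M r^\alpha$ for every $x_0 \in F$ and every $r>0$, any set $U$ that meets $F$ and has diameter $|U|$ is contained in $B(x_0,|U|)$ for any $x_0 \in U \cap F$, so $\nu'(U) \leq M\,|U|^\alpha$ directly. Summing over an arbitrary cover $\{U_i\}$ of $F$ gives $\nu'(F) \leq M \sum_i |U_i|^\alpha$, hence $H_\alpha(F) \geq \nu'(F)/M > 0$ with no Vitali-type selection. The covering reduction you flagged as the main obstacle is thus a non-issue, and the rest of your outline (inner regularity to pass from the $\nu$-a.e.\ finite set $\{U^\nu \leq M\}$ to a compact $F$ of positive measure, and the final supremum bookkeeping) is complete and correct.
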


\begin{defi}
Let $E$ be a compact subset of $\mathbb{R}^d$. The \emph{Fourier dimension} of $E$ (denoted $\dim_f E$) is the supremum of the numbers $0\leq \alpha \leq d$ such that $E$ carries a non-zero Radon measure $\mu$ such that the function $\mathbb{R}^d \rightarrow  \mathbb{R}^+: u \mapsto |u | ^{\alpha/2} | \hat \mu (u) |$,
$(u \in \mathbb{R})$ is bounded.
\end{defi}
It is well known that for any compact $E$ of $\mathbb{R}^d$, $\dim_f E \leq \dim_h E$. If $0< \dim_f E = \dim_h E$, then $E$ is called a Salem set. Every Salem set (or in general every compact set with non-zero Fourier dimension) is a set of multiplicity. Proofs of these results can be found in \cite{Ma:01}, Chapter 12. See also \cite{Muk:01} for a fully self-contained exposition of these results.

We also require the following result, a proof of which can be found on p 139 of \cite{kah:1}.
\begin{thm}
Suppose we are given a compact set $E$ in $\mathbb{R}^d$, and a mapping $f$ from $E$ to $\mathbb{R}^n$ such that for some $0 < \beta < 1$  and $C>0$,  $$|f(x) -f(y)| \leq C |x -y|^\beta,$$  for $x,y \in E$. Then $$\dim_h(f(E)) \leq \min(\frac{1}{\beta} \dim_h(E),n).$$
\end{thm}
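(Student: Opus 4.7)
The plan is to bound $\dim_h f(E)$ by $\dim_h(E)/\beta$ via a direct covering argument, and then combine this with the trivial upper bound $n$ coming from the fact that $f(E) \subset \mathbb{R}^n$.

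Fix any $s > \dim_h E$. By definition of Hausdorff dimension, $H_s(E) = 0$, so for every $\varepsilon > 0$ there exists $\delta > 0$ and a cover $(B_k)$ of $E$ by sets with $|B_k| \leq \delta$ such that
\[
\sum_k |B_k|^s < \varepsilon.
\]
Without loss of generality one may assume each $B_k$ meets $E$ (otherwise discard it). Set $E_k = E \cap B_k$. The Hölder estimate gives $\mathrm{diam}\, f(E_k) \leq C |B_k|^\beta$, so each $f(E_k)$ can be enclosed in a set $\widetilde B_k \subset \mathbb{R}^n$ of diameter at most $C|B_k|^\beta \leq C \delta^\beta$. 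Hence $(\widetilde B_k)$ is a cover of $f(E)$ by sets of diameter at most $C\delta^\beta$, and, setting $t = s/\beta$,
\[
\sum_k |\widetilde B_k|^{t} \leq \sum_k \bigl(C|B_k|^\beta\bigr)^{s/\beta} = C^{s/\beta}\sum_k |B_k|^s < C^{s/\beta}\varepsilon.
\]
Since $C\delta^\beta \to 0$ as $\delta \to 0$, this shows $H^{C\delta^\beta}_{s/\beta}(f(E)) \leq C^{s/\beta}\varepsilon$, and letting $\varepsilon, \delta \to 0$ yields $H_{s/\beta}(f(E)) = 0$. Therefore $\dim_h f(E) \leq s/\beta$. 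Taking the infimum over $s > \dim_h E$ gives $\dim_h f(E) \leq \tfrac{1}{\beta}\dim_h E$.

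On the other hand, $f(E) \subset \mathbb{R}^n$, and any bounded subset of $\mathbb{R}^n$ has Hausdorff dimension at most $n$ (which follows by covering with dyadic cubes of side $2^{-k}$: one needs at most $O(2^{nk})$ of them, so $H_\alpha^{\,\cdot}$ computations show $H_\alpha(f(E)) = 0$ for $\alpha > n$). Combining the two bounds yields
\[
\dim_h f(E) \leq \min\!\left(\tfrac{1}{\beta}\dim_h E,\; n\right),
\]
as required.

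The proof is essentially routine once the covering strategy is set up; the only real step is the substitution $t = s/\beta$ to convert a good $s$-cover of $E$ into a good $t$-cover of $f(E)$. There is no significant obstacle: the Hölder hypothesis is exactly what is needed to pass from diameters of covering sets of $E$ to diameters of their images, and no deeper tool (such as Frostman's lemma) is required. One minor point to keep track of is that the $\delta$-controlled size of the image cover shrinks to $0$ as $\delta \to 0$, which is automatic from $0 < \beta < 1$ (indeed, the argument works for any $\beta > 0$).
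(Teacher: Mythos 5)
Your proof is correct and is the standard covering argument; the paper itself gives no proof but simply cites Kahane (\emph{Some random series of functions}, p.~139), where essentially this same argument appears. One small slip in wording: $H_s(E)=0$ means that \emph{for every} $\delta>0$ and every $\varepsilon>0$ there is a $\delta$-cover with $s$-sum $<\varepsilon$ (not ``there exists $\delta$''), which is what you in fact use when you let $\delta\to 0$; and the bound $\dim_h f(E)\le n$ holds for any subset of $\mathbb{R}^n$, so the boundedness remark, while true here, is not needed.
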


By now taking (\ref{aasdaadaga}) into account, we can infer the following:
\begin{thm}
If $\phi$ is a complex oscillation and $E$ is a compact subset of the unit interval, then
$$\dim_h\phi(E) \leq \min(2\dim_h(E),1).$$
\label{theorem:imagedimension}
\end{thm}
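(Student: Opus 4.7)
The plan is to reduce the statement to the preceding Hausdorff-dimension theorem about H\"older maps by extracting from the modulus of continuity (\ref{aasdaadaga}) the fact that every complex oscillation is H\"older of any exponent strictly less than $1/2$ on the unit interval, and then let the exponent tend to $1/2$.

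First I would fix any $\beta \in (0,\tfrac{1}{2})$ and show that there is a constant $C_\beta > 0$ with
\[
|\phi(t+h) - \phi(t)| \leq C_\beta |h|^\beta \qquad \text{for all } t, t+h \in [0,1].
\]
By (\ref{aasdaadaga}), for any $C>1$ there exists $h_0 = h_0(C) > 0$ such that
\[
\sup_{t \in [0,1]} |\phi(t+h) - \phi(t)| \leq \sqrt{2C|h|\log(1/|h|)}
\quad \text{whenever } |h| \leq h_0.
\]
Since $\tfrac{1}{2} - \beta > 0$, the function $h \mapsto |h|^{1/2 - \beta}\sqrt{\log(1/|h|)}$ tends to $0$ as $h \to 0$, so after possibly shrinking $h_0$ we obtain $\sqrt{2C|h|\log(1/|h|)} \leq |h|^\beta$ for $|h|\leq h_0$. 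For $|h| \geq h_0$ we invoke compactness: $\phi$ is continuous on $[0,1]$, hence bounded in uniform norm by some $M$, and $|h|^\beta \geq h_0^\beta$, so $|\phi(t+h)-\phi(t)| \leq 2M \leq 2M h_0^{-\beta}|h|^\beta$. Combining the two regimes gives the claimed H\"older estimate with $C_\beta = \max(1, 2M h_0^{-\beta})$.

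Next I would apply the H\"older–dimension theorem recalled just before the statement (with $d = n = 1$) to the map $\phi\restriction_E \colon E \to \mathbb{R}$, obtaining
\[
\dim_h \phi(E) \leq \min\!\left(\tfrac{1}{\beta}\dim_h E,\; 1\right)
\qquad \text{for every } \beta \in (0, \tfrac{1}{2}).
\]
Letting $\beta \uparrow \tfrac{1}{2}$ along a sequence, the right-hand side decreases to $\min(2\dim_h E,\,1)$, which yields the desired inequality.

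The proof has no serious obstacle; the only subtlety is handling the logarithmic factor in the modulus of continuity, which is why one cannot directly use $\beta = \tfrac{1}{2}$ but must work with every $\beta < \tfrac{1}{2}$ and pass to the limit. Note also that we use $\min(\cdot,1)$ rather than just $\tfrac{1}{\beta}\dim_h E$ because $\phi(E) \subset \mathbb{R}$ automatically has Hausdorff dimension at most $1$.
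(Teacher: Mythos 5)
Your proof is correct and follows exactly the route the paper intends: the paper places the H\"older--dimension theorem immediately before this statement and says it follows ``by now taking (\ref{aasdaadaga}) into account,'' i.e.\ the modulus of continuity gives $\beta$-H\"older continuity for every $\beta<1/2$, one applies the H\"older theorem, and lets $\beta\uparrow 1/2$. You have simply written out the details that the paper leaves implicit.
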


{\section{Main Results}

We shall prove the following:

\begin{thm} \label{asdadgha1}
Let $0 < \alpha \leq 1$. Suppose $\phi$ is a complex oscillation and $\theta$ is an effective $\alpha$-Frostman measure on $[0, 1]$ and $\epsilon >0$. Then  for all reals $u$ such that $|u|$ is sufficiently large (depending on $\epsilon$), 
\begin{eqnarray} \label{eqadaqf087}
 \left|\int_0^1 e^{iu\phi(t)} d\theta(t)\right| \leq \frac{2}{|u|^{\alpha - \epsilon}}.%+ \frac{1}{2^{\vert u \vert /3}.}
\end{eqnarray}
\end{thm}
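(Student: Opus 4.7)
The plan is a finite-data incompressibility argument: I transport a Kahane-type even-moment estimate, proved over the uniform measure on $\{0,1\}^n$, to a pointwise decay bound on $\hat\nu(u)$ with $\nu=\phi_*\theta$. The core observation is that the binary code $c(x_n)$ of any complex approximant of $\phi$ satisfies $K(c(x_n))\geq n-d$, so it cannot lie in any uniformly computably enumerable ``thin'' subset of $\{0,1\}^n$.

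By the Kjos-Hanssen--Szabados bound (\ref{123dadgasa}), fix a complex sequence $(x_n)$ with $x_n\in C_n$, $\|\phi-x_n\|_\infty\leq C_1\log n/\sqrt n$, and an incompressibility coefficient $d$ for $\phi$. Set $\nu_n=(x_n)_*\theta$. Using $|e^{iua}-e^{iub}|\leq |u||a-b|$ and $\theta([0,1])=1$,
$$|\hat\nu(u)-\hat\nu_n(u)|\leq |u|\,C_1\log n/\sqrt n.$$
Given $\epsilon>0$, choose an integer $k=k(\alpha,\epsilon)$ (to be fixed below) and, for each large $|u|$, set $n=n(u)=\lceil |u|^{2+2\alpha}\rceil$, which forces this error to be at most $|u|^{-(\alpha-\epsilon)}$ and $\log n=O(\log|u|)$.

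The key input, to be established as a finitisation of Kahane's computation and postponed to Section 6, is the moment bound
$$2^{-n}\sum_{s\in\{0,1\}^n}\left|\int_0^1 e^{iu\,c_*(s)(t)}\,d\theta(t)\right|^{2k}\leq C_k\,|u|^{-2k\alpha}$$
for $|u|$ large. Given this, let $B_{n,u}$ be the set of $s\in\{0,1\}^n$ for which the integrand in (\ref{eqadaqf087}) with $\phi$ replaced by $c_*(s)$ exceeds $|u|^{-(\alpha-\epsilon)}$. Markov yields $|B_{n,u}|\leq 2^n\,C_k|u|^{-2k\epsilon}$. Since $\hat\nu$ is Lipschitz in $u$ with constant $\|\phi\|_\infty$, replacing $u$ by a dyadic rational $u'$ with $|u-u'|\leq |u|^{-10}$ only inflates the bound by $|u|^{-5}$; and because $\theta$ is an effective $\alpha$-Frostman measure, the set $B_{n,u'}$ is uniformly computably enumerable in $(n,u')$. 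Coding any $s\in B_{n,u'}$ by its index in this enumeration together with self-delimiting descriptions of $n$ and $u'$ yields
$$K(s)\leq n-2k\epsilon\log_2|u|+O(\log n+\log|u|)=n-(2k\epsilon-O(1))\log_2|u|.$$
Choosing $k$ so that $2k\epsilon$ strictly exceeds this hidden constant, the right-hand side is strictly less than $n-d$ for all sufficiently large $|u|$, contradicting $K(c(x_n))\geq n-d$. Hence $c(x_n)\notin B_{n,u'}$, and summing the three error contributions gives $|\hat\nu(u)|\leq 2|u|^{-(\alpha-\epsilon)}$, as required.

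The main obstacle is the moment estimate of Section 6. Kahane's Gaussian computation for Brownian motion has to be replaced by a calculation over the $\pm\sqrt n$-slope zigzag paths $c_*(s)$: one expands the $2k$-th power, integrates against $d\theta^{\otimes 2k}$, and organises the resulting combinatorics so that each pairing of variables converts, via the Frostman bound $\theta(I)\leq C|I|^\alpha$, into a factor of the correct order $|u|^{-\alpha}$. A secondary but minor issue is the uniform enumerability of $B_{n,u'}$, which follows from the computability of $\theta$ on dyadic intervals.
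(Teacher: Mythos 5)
Your proposal follows the same three-step strategy as the paper: (i) a Kahane-style even-moment estimate over the uniform measure on binary strings, (ii) Markov's inequality to deduce that the ``bad'' set of strings has small cardinality and is uniformly c.e.\ (so its members have low prefix-free complexity), and (iii) the incompressibility of the code of a complex approximant to conclude that the approximant falls outside the bad set, followed by the triangle-inequality bookkeeping to transfer from the approximant back to $\phi$.

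The difference from the paper is one of parameterisation and normalisation rather than of substance. The paper synchronises the path and measure discretisations: it sets $\phi_n=x_N$ and $\theta_n$ with $N=2^n$, works only with frequencies $u\in[n,n+1]$ sampled on the grid $n,n+\tfrac 1n,\dots,n+1$, so the number of pieces is exponential in $u$ and the $\phi_n\to\phi$ and $\theta_n\to\theta$ errors are of size $n^{O(1)}2^{-n/2}$, far below the target decay. You instead fix a frequency $u$, choose a polynomial scale $n=n(u)$ for the zigzag approximant, leave $\theta$ undiscretised in the moment bound, and round $u$ to a nearby dyadic rational; this still works because the only genuine constraints are that $uq/\sqrt n$ be small (so the cosine estimates apply) and that $|u|\,\|\phi-x_n\|_\infty$ be below the target rate — both of which hold for $n$ polynomial in $|u|$. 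Two small imprecisions worth tightening: your Lipschitz constant for $u\mapsto\int e^{iuc_*(s)(t)}\,d\theta(t)$ is $\|c_*(s)\|_\infty\le\sqrt n$, not $\|\phi\|_\infty$, though with $|u-u'|\le|u|^{-10}$ the resulting error is still negligible; and if the postponed moment lemma is (as in the paper) proved for the discrete $\theta_n$ rather than for $\theta$ itself, you need the extra $\theta_n\to\theta$ comparison term that appears in the paper's Section 5.
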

 In other words, if  $\nu$ is the pushout measure of $\theta$ under the complex oscillation $\phi$, its Fourier transform satisfies
$$|\hat{\nu}(u)| \leq C \frac{1}{ |u|^{\alpha - \epsilon}}$$ for a constant C (depending only on $\epsilon$) and $|u|\geq 1$.
Now write $E$ for the support of $\theta$ and $F$ for the support of $\nu$. Then $F = \phi(E)$. It follows that $$\dim_f F \geq \min(1, 2\alpha).$$
This theorem has the following consequence:
\begin{thm}
Suppose $E$ is a compact subset of the unit interval whose Hausdorff dimension $\beta$ is effectively witnessed 
and such that $0 < \beta \leq 1$. If $\phi$ is a complex oscillation, then the image, $\phi(E)$, is a Salem set of dimension $\min\{1, 2 \beta\}$.
\end{thm}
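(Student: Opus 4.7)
The plan is to sandwich both the Hausdorff dimension and the Fourier dimension of $\phi(E)$ between $\min\{1,2\beta\}$ from above and below, using Theorem~\ref{theorem:imagedimension} for the upper bound and Theorem~\ref{asdadgha1} for the lower bound.

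First I would record the upper bound. Since $\phi$ is a complex oscillation, the modulus of continuity \eqref{aasdaadaga} places $\phi|_E$ in the H\"older class of every exponent $<1/2$; Theorem~\ref{theorem:imagedimension} then yields
\[
\dim_f \phi(E) \;\leq\; \dim_h \phi(E) \;\leq\; \min\{2\beta,\,1\},
\]
where the first inequality is the standard fact that $\dim_f\leq\dim_h$ for compact subsets of $\mathbb{R}^d$.

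For the matching lower bound on $\dim_f \phi(E)$ I would proceed as follows. Because the Hausdorff dimension of $E$ is effectively witnessed, for every rational $\alpha$ with $0<\alpha<\beta$ we can select an effective $\alpha$-Frostman measure $\theta_\alpha$ supported on $E$. Fix a small $\epsilon>0$ and apply Theorem~\ref{asdadgha1} to $\phi$, $\theta_\alpha$ and $\epsilon$: if $\nu_\alpha := \phi_\ast \theta_\alpha$ is the pushout, then $\nu_\alpha$ is a non-zero Radon measure carried by $\phi(E)=:F$, and for all sufficiently large $|u|$,
\[
|\hat{\nu}_\alpha(u)| \;=\; \Bigl|\int_0^1 e^{iu\phi(t)}\,d\theta_\alpha(t)\Bigr| \;\leq\; \frac{2}{|u|^{\alpha-\epsilon}}.
\]
Hence $u\mapsto |u|^{(\alpha-\epsilon)}|\hat{\nu}_\alpha(u)|$ is bounded on $\mathbb{R}$, which by definition of the Fourier dimension gives $\dim_f F \geq 2(\alpha-\epsilon)$. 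Letting $\epsilon\downarrow 0$ and $\alpha\uparrow\beta$ through rationals yields $\dim_f F \geq 2\beta$ when $\beta\leq 1/2$; when $\beta>1/2$ the same argument, combined with the trivial bound $\dim_f F\leq 1$ valid in $\mathbb{R}$, gives $\dim_f F = 1$. In either case $\dim_f F \geq \min\{1,2\beta\}$.

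Combining the two bounds we get
\[
\min\{1,2\beta\} \;\leq\; \dim_f \phi(E) \;\leq\; \dim_h \phi(E) \;\leq\; \min\{1,2\beta\},
\]
so $\dim_f \phi(E) = \dim_h \phi(E) = \min\{1,2\beta\}$, and therefore $\phi(E)$ is a Salem set of the claimed dimension. The substantive content of the argument is entirely packaged in Theorem~\ref{asdadgha1}; the only mild subtlety is to make sure that the effective witnessing hypothesis really supplies an $\alpha$-Frostman measure \emph{computable in the required sense} for every rational $\alpha<\beta$, so that Theorem~\ref{asdadgha1} can be applied uniformly and the $\limsup$ over $\alpha$ genuinely reaches $\beta$.
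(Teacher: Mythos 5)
Your proposal is correct and follows essentially the same route as the paper's own proof: an upper bound on $\dim_h\phi(E)$ from the H\"older continuity of a complex oscillation via Theorem~\ref{theorem:imagedimension}, a matching lower bound on $\dim_f\phi(E)$ by applying Theorem~\ref{asdadgha1} to the effective $\alpha$-Frostman measures supplied by the effective-witnessing hypothesis for each rational $\alpha<\beta$, and the squeeze $\dim_f\leq\dim_h$. The paper's proof is simply a terser version of your argument, leaving the $\alpha\uparrow\beta$ and $\epsilon\downarrow 0$ passages implicit.
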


\begin{proof}
By Theorem \ref{asdadgha1}, the Fourier dimension of $\phi(E)$ is $\geq  \min\{1, 2\alpha\}$ for any $\alpha$ such that $0\leq \alpha < \beta$. Hence $\dim_f \phi(E) \geq \min\{1, 2\beta\}$. 

%Since the modulus of $\phi$ satisfies relation (\ref{aasdaadaga}),
Since $\dim_h \phi(E) \leq \min\{1, 2 \beta\}$ (Theorem \ref{theorem:imagedimension}), it follows that $\dim_f \phi(E) = \dim_h \phi(E) = \min\{1,  2\beta\}$ (using the fact that $\mbox{dim}_f \phi(E) \leq \mbox{dim}_h \phi(E)$).
\end{proof}

It follows that all the sets $S_\gamma(\omega)$  (see (\ref{equation:cantor}), (\ref{equation:theset}) and (\ref{equation:salemfromkc})) with $\gamma \in (0,1)_r$ and $\omega \in KC$, are Salem sets. Moreover
$$\dim_h S_\gamma(\omega) =\dim_f S_\gamma(\omega) =\gamma.$$ 
We can therefore infer:
\begin{thm}
Write $(0,1)_r$ for the set of real computable numbers in the unit interval and $KC$ for the set of infinite binary numbers which are random in the sense of Kolmogorov-Chaitin-Levin-Martin-L\" {o}f.
There is a $\Pi_2^0$ predicate over $\mathbf{R} \times (0,1)_r\times KC$ which for each $(\gamma,\omega) \in (0,1)_r\times KC$, defines a Salem set $S_\gamma(\omega)$ of Fourier dimension $\gamma$.
\label{theorem:effectivebeurling}
\end{thm}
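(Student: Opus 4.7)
The plan is to realise Theorem \ref{theorem:effectivebeurling} as a direct corollary of the preceding Salem-set theorem applied to the Cantor ternary sets $C_\xi$, together with the uniform computability of complex oscillations from their associated $KC$-strings supplied by Theorem \ref{theorem:isomorphism}. First, given a computable real $\gamma \in (0,1)_r$, I choose $\xi = 2^{-2/\gamma}$; since $\gamma$ is computable so is $\xi$, and $\gamma \in (0,1)$ forces $\xi \in (0,\tfrac14)$. Building $C_\xi$ by the construction (\ref{equation:cantor}), the discussion following the definition of an effective $\alpha$-Frostman measure already shows that $C_\xi$ has Hausdorff dimension $\beta = \log 2/\log(1/\xi) = \gamma/2$ and that $\beta$ is \emph{effectively} witnessed: the Cantor measure $\nu$ assigning mass $2^{-n}$ to each of the $2^n$ level-$n$ intervals is a computable flow on $T(C_\xi)$, and hence an effective $\beta$-Frostman measure.

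Next, for any $\omega \in KC$, let $\Phi(\omega)$ be the complex oscillation supplied by Theorem \ref{theorem:isomorphism}, and set $S_\gamma(\omega) = \Phi(\omega)(C_\xi)$ as in (\ref{equation:theset}). The theorem immediately preceding this statement asserts that whenever $E$ is a compact subset of the unit interval whose Hausdorff dimension $\beta \leq 1$ is effectively witnessed and $\phi$ is any complex oscillation, $\phi(E)$ is a Salem set of dimension $\min\{1,2\beta\}$. Applying this to $E = C_\xi$ and $\phi = \Phi(\omega)$ gives that $S_\gamma(\omega)$ is a Salem set of Fourier (and Hausdorff) dimension $\min\{1,2\beta\} = \gamma$.

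The only remaining point is the arithmetical complexity of the family $(\gamma,\omega) \mapsto S_\gamma(\omega)$. I would start from (\ref{equation:salemfromkc}), which expresses
\[
x \in S_\gamma(\omega) \iff \forall n\, \exists y \in \mathbb{D}_0\, \exists J \in \mathcal{A}_n\;\bigl(y \in J\bigr)\wedge \bigl|\Phi(\omega)(y)-x\bigr| < 2^{-n}.
\]
Since $\xi$ is computable uniformly from $\gamma$, the collection $\mathcal{A}_n$ of surviving Cantor intervals is itself computable from $(\gamma,n)$, and membership $y \in J$ for a dyadic $y$ and a Cantor interval $J$ is decidable in $(\gamma,y,J)$. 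By Theorem \ref{theorem:isomorphism}, the value $\Phi(\omega)(y)$ can be computed to any precision uniformly from $(\omega,y,n)$, so the predicate $|\Phi(\omega)(y)-x|<2^{-n}$ is $\Sigma_1^0$ in $(\omega,\gamma,x)$ relative to the appropriate oracles. The matrix is therefore $\Sigma_1^0$ in $(\gamma,\omega,x)$, and prefixing it by $\forall n$ yields a $\Pi_2^0$ predicate over $\mathbf{R}\times(0,1)_r\times KC$, as required.

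Essentially no real obstacle is left: all the analytic work (effective Frostman representation of $C_\xi$ and the Salem-set conclusion for images under complex oscillations) was discharged earlier, so the proof is almost entirely a matter of (i) verifying that the bookkeeping ties $\gamma$ to a computable $\xi \in (0,\tfrac14)$ and (ii) reading off the $\Pi_2^0$ form from (\ref{equation:salemfromkc}) using the uniformity in Theorem \ref{theorem:isomorphism}. The most delicate point is making sure the uniformity is genuinely joint in $(\gamma,\omega)$ rather than relativised separately, but this is immediate from the fact that both the Cantor construction and the evaluation algorithm of $\Phi$ are uniform in their parameters.
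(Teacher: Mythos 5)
Your proposal is correct and follows the same route the paper takes: realise the theorem as a corollary of the preceding Salem-set theorem applied to $C_\xi$ (with $\xi$ computable in $(0,\tfrac14)$ uniformly determined by $\gamma$), and read off the $\Pi_2^0$ form from (\ref{equation:salemfromkc}) together with the uniformity of Theorem \ref{theorem:isomorphism}. The only cosmetic difference is that you make the change of variables $\xi = 2^{-2/\gamma}$ explicit, whereas the paper simply notes that $\gamma = 2\log 2/\log(1/\xi)$ sets up a computable bijection between $\xi \in (0,\tfrac14)$ and $\gamma \in (0,1)$.
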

\section{Fourier analytical properties of finite sequences of high Kolmogorov complexity}

In this section, we fix an integer $n$ and set $N = 2^n$. We assume that the set $\{0, 1\}^N$ is endowed with the canonical probability measure  (the product $\mu_1 * \mu_2 * \ldots *\mu_N$ where $\mu_k$ is the  Bernoulli  fair-coin toss measure on $\{0, 1\})$.

For any finite binary string $\omega \in \{0, 1\}^N$, denote by $S_n(\omega)$ the continuous function  on the unit interval that vanishes at $0$ and is linear on each interval $I_{j,n} = [(j-1)/N,j/N)$, $j=1,\ldots ,N$ with slope  $ \pm \sqrt{N}$ according to whether $\omega_j = 1$ or $\omega_j = 0$. Here $\omega  =\omega_0\omega_1\ldots$.
For any $t\in [0, 1]$, we denote by $S_n(t)$, the random variable $\omega \mapsto S_n(\omega, t) = S_n(\omega)(t)$ on $\{0, 1\}^N$. Recall that $\delta_{a}$ is the Dirac measure concentrated at $a$. 
We have the following result:
\begin{thm} \label{adjakfradad1234}
Let $\theta$ be an effective $\alpha$-Frostman measure on $[0, 1]$ and $\theta_n$ its $n$-step approximation, that is,
\begin{equation}
\theta_n = \sum_{j = 1}^N c(j) \delta_{j/N},
\label{eq:dirac}
\end{equation}
 where $c(j) = \theta(I_{j,n})$, \, $N=2^n$ and  $ I_{j,n} = [(j-1)/N, j/N))$ , for $j = 1, 2, \ldots, N$.
Then for any positive integer $q \leq \log n$, and any real number
$u$ such that $n \leq u \leq n+1$,
\begin{eqnarray}
\mathbb{E}\left(\left\vert \int_0^1 e^{i u S_n(\omega, t)} d \theta_n(t)\right\vert ^{2q}\right) \leq (22\,q\,u^{-2\alpha})^q ,
\end{eqnarray}
for all large values of $n$. %$n\geq 10$. (or $2^n \geq n^3$.)
\end{thm}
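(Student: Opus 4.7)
The plan is to adapt Kahane's moment method to the discrete random walk setting and carefully track constants. Set $F_n(u):=\int_0^1 e^{iuS_n(t)}\,d\theta_n(t)$ and expand
\[
|F_n(u)|^{2q} = \int_{[0,1]^{2q}} \exp\!\left(iu\sum_{l=1}^{q}(S_n(t_l)-S_n(s_l))\right)\,d\theta_n^{\otimes 2q}(t,s).
\]
Writing $S_n(j/N)=N^{-1/2}\sum_{i=1}^j\epsilon_i$ with $\epsilon_i:=2\omega_i-1$ i.i.d.\ Rademacher, the sum rewrites as $N^{-1/2}\sum_i c_i\epsilon_i$ where $c_i:=\#\{l:t_l\geq i/N\}-\#\{l:s_l\geq i/N\}$, and by independence
\[
\mathbb{E}\!\left[e^{iu\sum_l(S_n(t_l)-S_n(s_l))}\right] = \prod_{i=1}^{N}\cos(uc_i/\sqrt{N}).
\]
Because $|c_i|\leq 2q\leq 2\log n$ and $|u|\leq n+1$, we have $|uc_i|/\sqrt{N}\leq 2(n+1)\log n\cdot 2^{-n/2}\to 0$, so for large $n$ the elementary inequality $|\cos x|\leq e^{-x^2/3}$ on $[-1,1]$ applies to every factor and yields the pointwise bound $\prod_i \cos(uc_i/\sqrt{N}) \leq \exp(-u^2V(t,s)/3)$, where $V(t,s):=\int_0^1 f(x)^2\,dx$ and $f(x):=\#\{l:t_l\geq x\}-\#\{l:s_l\geq x\}$.

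By symmetry of $V$ under independent permutations of the $t_l$'s and the $s_l$'s, restricting the outer integration to the ordered region $R:=\{t_1\leq\cdots\leq t_q,\ s_1\leq\cdots\leq s_q\}$ costs a factor $(q!)^2$. The key geometric lemma is the sharpening $V(t,s)\geq \sum_{l=1}^{q}|t_l-s_l|$ on $R$. Indeed, $f(x)=\sum_l\mathrm{sgn}(t_l-s_l)\,\mathbf{1}_{\min(t_l,s_l)<x\leq\max(t_l,s_l)}$; if some $x$ saw two pairs contributing with opposite signs, say $t_{l_1}<x\leq s_{l_1}$ and $s_{l_2}<x\leq t_{l_2}$, the first relation forces $l_1<l_2$ (from $t_{l_1}<t_{l_2}$) while the second forces $l_1>l_2$ (from $s_{l_1}>s_{l_2}$), a contradiction. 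Hence $|f(x)|$ equals the integer count of contributing pairs, so $\int_0^1|f|\,dx=\sum_l|t_l-s_l|$ and $V=\int f^2\geq\int|f|$.

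Now enlarge the domain to $\{t_1\leq\cdots\leq t_q\}$ (free $s$'s); the resulting integrand $\prod_l e^{-u^2|t_l-s_l|/3}$ is invariant under simultaneous permutation of the pairs $(t_l,s_l)$, so the $t$-ordering extracts exactly a factor $1/q!$ from the unrestricted integral, which factorises into a $q$-th power:
\[
\mathbb{E}|F_n(u)|^{2q} \leq q!\left(\int_0^1\!\!\int_0^1 e^{-u^2|t-s|/3}\,d\theta_n(t)\,d\theta_n(s)\right)^{q}.
\]
A standard Fubini / layer-cake computation $\int e^{-u^2|t-s|/3}d\theta_n(t)=\int_0^\infty(u^2/3)e^{-u^2r/3}\theta_n(B(s,r))\,dr$, combined with the effective Frostman bounds \eqref{eadagasdw} and a split of the $r$-integral at the resolution scale $1/N$, gives $\int\!\!\int e^{-u^2|t-s|/3}\,d\theta_n^{\otimes 2}\leq 22\,u^{-2\alpha}$ for all sufficiently large $n$ (the contribution from $r<1/N$ is super-exponentially small in $n$ and is absorbed into the constant). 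Since $q!\leq q^q$, this yields the desired bound $(22qu^{-2\alpha})^q$.

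The main obstacle is the sign-coherence inequality $V\geq\sum_l|t_l-s_l|$ on the ordered region: this is what converts the naive product bound (which would cost an extra factor of $q!$) into the sharper estimate producing a linear rather than quadratic dependence on $q$ in the final constant, and it is essential in order to feed $\mathbb{E}|F_n(u)|^{2q}$ later into a Chebyshev / Borel--Cantelli argument to deduce the almost-sure pointwise decay \eqref{eqadaqf087}. A secondary technical task is to track the explicit constant $22$ through the cosine-to-exponential step and the layer-cake computation, verifying that all discrete correction terms (cosines versus Gaussian characteristic functions, and the atomic $\theta_n$ versus the continuous Frostman measure $\theta$) remain of lower order in the stated ranges $q\leq\log n$ and $n\leq u\leq n+1$.
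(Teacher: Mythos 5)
Your strategy is genuinely different from the paper's, and closer in spirit to Kahane's original continuous argument: rather than manipulating the discrete sums combinatorially, you pass immediately to the characteristic-function bound $\prod_i\cos(uc_i/\sqrt N)\le\exp(-u^2V(t,s)/3)$ and reduce everything to the geometric \emph{sign-coherence lemma} $V(t,s)\ge\sum_l|t_{(l)}-s_{(l)}|$. That lemma and its proof are correct, and it plays the role that the paper's observation ``$\alpha_k=\epsilon_k+\cdots+\epsilon_{2q}$ is odd, hence non-zero, for even $k$'' plays in their telescoping decomposition; both observations say that the running count cannot vanish too often, and both are what prevent a loss of a second factor of $q!$. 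The closing layer-cake estimate of $M=\int\!\!\int e^{-u^2|t-s|/3}\,d\theta_n^{\otimes 2}$ via the bounds \eqref{eadagasdw}, split at scale $1/N$, is also fine and does give a constant comparable to $22$ for $\alpha\in(0,1]$. By contrast, the paper stays with discrete sums, conditions on the odd-indexed $j$'s, reduces the even sums to weighted geometric series $\sum_h c(r+h)[\cos(u/\sqrt N)]^h$, and controls those by an Abel-summation (Lemma~\ref{dakjasfad}) together with the Frostman bound; that route sidesteps the continuum factorization entirely.

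There is, however, one genuine gap in your chain. The claim that ``the $t$-ordering extracts exactly a factor $1/q!$'' is only an identity for a non-atomic measure. Here $\theta_n$ is purely atomic, and in general
\[
\int_{\{t_1\le\cdots\le t_q\}}\prod_{l}e^{-u^2|t_l-s_l|/3}\,d\theta_n^{\otimes 2q}\ \ge\ \frac{1}{q!}\,M^q,
\]
with strict inequality as soon as some atom carries positive mass (take $q=2$ with a two-atom measure to see it explicitly); the ordered region absorbs tied $t$-configurations with smaller multiplicity than the unrestricted cube. Thus you are replacing a larger quantity by $\frac{1}{q!}M^q$, which is an inequality in the wrong direction and breaks the chain of upper bounds. (Note the asymmetry: the earlier step $\int_{[0,1]^{2q}}\Phi\le(q!)^2\int_R\Phi$ \emph{is} valid for $\Phi\ge 0$, but the losses and gains on the diagonal do not cancel automatically.) The fix is to work with the strict-ordering region $\{t_1<\cdots<t_q\}$, where the factor $1/q!$ is an exact inequality, and to bound the complementary tie contribution separately: any tie among the $t$'s forces a factor $c(j)^2\le N^{-2\alpha}$ somewhere, so the tie terms are $O(q^2 N^{-\alpha})\cdot(\text{lower-order})$, which is super-exponentially small compared with $(u^{-2\alpha})^q$ in the regime $q\le\log n$, $u\asymp n$, $N=2^n$. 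With that accounting the conclusion $\mathbb{E}|F_n(u)|^{2q}\le q!\,M^q\le(22\,q\,u^{-2\alpha})^q$ is recovered, but the ``exactly'' needs to go.
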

The proof of Theorem \ref{adjakfradad1234} is given in Section \ref{adafafki8}. 
We now deduce from Theorem \ref{adjakfradad1234} the following Theorem which will be used in the following section to prove Theorem \ref{asdadgha1}.
%To prove Theorem \ref{asdadgha1}, we will first show the following result.

\begin{thm} \label{saf1}
Let $\theta$ be an effective $\alpha$-Frostman measure on $[0, 1]$ and $\theta_n$ its $n$-step approximation (\ref{eq:dirac}). %that is, $\theta_n = \sum_{j = 1}^N c(j) \delta_{j/N}$ where $c(j) = \theta(I_{j,n})$ with $ I_{j,n} = [(j-1)/N, j/N))$ and $N = 2^n$. 
Let $\epsilon >0$ be any fixed rational number.  For any natural number $d$, there is a number $L_d$ such that for $n \geq L_d$ we have for all $\omega \in \{0,1\}^N,\; N=2^n$, whose Kolmogorov complexity satisfies $$K(\omega) > N-d,$$ that
\begin{eqnarray} \label{dadjfad23}
\left \vert \int_0^1 e^{i u S_n(\omega, t)} d\theta_n(t)\right \vert^2 \leq  \frac{1}{u ^{2\alpha - \epsilon }},
\end{eqnarray}
  for all rational numbers $u = n,\, n+\frac{1}{n},\, n+\frac{2}{n},\, \ldots, n+1.$
  %\item[(ii)] and any rational number $u$ such that $n \leq u \leq n+1$,
\end{thm}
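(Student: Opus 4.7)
The plan is to derive Theorem \ref{saf1} from the even-moment estimate of Theorem \ref{adjakfradad1234} by combining Markov's inequality with a Kolmogorov-complexity deficiency argument of Martin-L\"of type. Fix the rational $\epsilon>0$ and the integer $d\geq 1$. For each $n$, set $N=2^n$, $U_n=\{n,\,n+1/n,\ldots,n+1\}$, and
\[
I(\omega,u) \;=\; \int_0^1 e^{iuS_n(\omega,t)}\,d\theta_n(t),\qquad B_n \;=\; \bigl\{\omega \in \{0,1\}^N : \exists u \in U_n,\ |I(\omega,u)|^2 > u^{-(2\alpha-\epsilon)}\bigr\}.
\]
Because $\theta$ is an \emph{effective} $\alpha$-Frostman measure, the coefficients $c(j)=\theta(I_{j,n})$ of $\theta_n$ are uniformly computable in $(j,n)$, and $|I(\omega,u)|^2$ reduces to a finite exponential sum whose terms are computable from $\omega$. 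Hence the predicate ``$\omega\in B_n$'' is uniformly c.e.\ in $n$, which will suffice.

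The first step is to bound $\mu(B_n)$. Choosing $q=\lfloor\log n\rfloor$ (allowed by the hypothesis $q\leq\log n$) and applying Markov to the $2q$-th moment bound of Theorem \ref{adjakfradad1234},
\[
\mu\bigl(|I(\omega,u)|^2 > u^{-(2\alpha-\epsilon)}\bigr) \;\leq\; \frac{(22\,q\,u^{-2\alpha})^q}{u^{-q(2\alpha-\epsilon)}} \;=\; (22q)^q\,u^{-q\epsilon} \;\leq\; (22\log n)^{\log n}\,n^{-\epsilon\log n}.
\]
Taking logarithms turns this into $\log n\cdot\log(22\log n)-\epsilon(\log n)^2$, which is at most $-(\epsilon/2)(\log n)^2$ for all large $n$. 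A union bound over the $\leq n+1$ elements of $U_n$ then yields $\mu(B_n)\leq 2^{-(\epsilon/4)(\log n)^2}$ for $n$ sufficiently large.

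The second step is the standard enumeration inequality. Since $B_n$ is uniformly c.e.\ in $n$, any $\omega\in B_n$ can be described by (i) $n$, prefix-freely coded in $O(\log n)$ bits, and (ii) the index of $\omega$ in a canonical enumeration of $B_n$, using at most $\lceil\log|B_n|\rceil=N+\log\mu(B_n)+O(1)$ bits. Substituting the estimate for $\mu(B_n)$ gives
\[
K(\omega) \;\leq\; N - \tfrac{\epsilon}{4}(\log n)^2 + O(\log n).
\]
If $K(\omega)>N-d$, this would force $\tfrac{\epsilon}{4}(\log n)^2 < d + O(\log n)$, which fails for every $n\geq L_d$, with $L_d$ effectively computable from $d$ and $\epsilon$. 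Hence no such $\omega$ lies in $B_n$, which is exactly the conclusion.

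The main obstacle I anticipate is the routine but finicky computability bookkeeping in the definition of $B_n$: the strict inequality ``$|I(\omega,u)|^2 > u^{-(2\alpha-\epsilon)}$'' is only semidecidable in general, so to turn $B_n$ into an object whose elements carry a short description I rely only on its c.e.-ness, together with the fact that for $\omega\in B_n$ the defining inequality is eventually witnessed to finite precision (using that $\theta_n$ and $u^{-(2\alpha-\epsilon)}$ are approximable to arbitrary accuracy). Everything else---the choice $q=\lfloor\log n\rfloor$, the union bound over $U_n$, and the description-length inequality---is standard.
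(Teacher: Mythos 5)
Your proposal is correct and follows essentially the same route as the paper's proof: apply Chebyshev/Markov to the even-moment estimate of Theorem \ref{adjakfradad1234}, then observe that any $\omega$ in the resulting exponentially small bad set can be described by its index in an effective enumeration, forcing $K(\omega)$ below $N-d$ once $n$ is large. The only differences are cosmetic parameter choices: the paper fixes $q=\lceil 6/\epsilon\rceil$ to get the bound $P\{A_u\}\leq u^{-5}$ and describes $\omega$ by the pair $(n,r)$ (where $u=n+r/n$) rather than taking a union bound over $u\in U_n$, while you let $q=\lfloor\log n\rfloor$ grow and union over $u$; both give more than enough margin between the $O(\log n)$ description overhead and the complexity deficiency.
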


%This result shows that sequences of high Kolmogorov complexity exhibit some Fourier-type properties. 

\begin{proof}
For a given $n$ and  $u \in\{ n,\, n+\frac{1}{n},\, n+\frac{2}{n},\, \ldots, n+1\}$, we define
$$F(\omega,u) = \left\vert \int_0^1 e^{i u S_n(\omega, t)} d \theta_n(t)\right\vert ^{2}, $$  when $\omega \in \{0,1\}^N$.

Then for any $\epsilon >0$, and sufficiently large integers $n$, we have that %$n$ such that $2^n \geq u^4$,  
 \begin{eqnarray} \label{hasdqsADA}
 P \left\{F(\omega, u) >  u^{-2\alpha + \epsilon}\right \} \leq \frac{1}{u^5}.
 \end{eqnarray}
Indeed, by the Chebyshev's inequality and Theorem \ref{adjakfradad1234}, we obtain,  for $q \leq \log n$, that
\begin{eqnarray*} \label{hasdqsADasa}
 P \left\{F(\omega, u) >  u^{-2\alpha + \epsilon} \right \} & \leq &  \frac{\mathbb{E}\left[(F(\omega, u))^q\right]}{(u^{-2\alpha + \epsilon})^q}\\
 &\leq & \frac{(22\,q\,u^{-2\alpha})^q}{(u^{-2\alpha + \epsilon})^q}.
 \end{eqnarray*}
 Taking $q = \lceil 6/\epsilon \rceil$ (the smallest integer $\geq 6/\epsilon$) and $u$ such that $u \geq (22q)^{q}$ yields relation (\ref{hasdqsADA}).
 
Let us denote by $A_u$ the event  $F(\omega, u) > u^{-2\alpha + \epsilon}$ in $\{0, 1\}^N$. % where $n = n(u)$ is a fixed increasing computable function such that $2^n \geq u^4.$
By relation (\ref{hasdqsADA}), for large $u$, 
 \begin{eqnarray} \label{eawddxgacfs}
 P \{A_u \} \leq \frac{1}{u^{5}}
 %\quad \,\,\,\mbox{  (for } u \geq e^{2/\epsilon}) \mbox{ and } C _0 = (22)^{(2/\epsilon)}.
\end{eqnarray}

We want to show that a finite binary $\omega \in A_{u}$ 
has low Kolmogorov-Chaitin complexity, since the cardinality of $
A_{u}$ is ``small". Consider the following algorithm $\phi _{1}$ which on
self-delimiting inputs for the integers $n$ and $r$ with $0\leq r\leq n$
enumerates the elements of $A_{u}.$ Firstly $\phi _{1}$ computes $N=2^{n}$
and $u=n+\frac{r}{n}.$ It then computes from $N$, increasingly accurate
approximations to the coefficients $c(1),c(2),\ldots ,c(N)$.  %and hence the effective $\alpha$-Frostman measure $\theta _{n}$. 
Since $F(\omega, u)$ is computable in $c(1),c(2),\ldots ,c(N)$  we get that every string in $A_{u}$
will eventually be enumerated. 

We can thus specify any string in $A_{u}$ by giving its index $j$ in this
enumeration. For this we need only self-delimiting codes for $n,r$ and a
code for $j$. Since we know that $\left\vert A_{u}\right\vert \leq
2^{N-5\log n}$ we know that $\left\vert j\right\vert \leq N-5\log n.$ By
padding $j$ with initial zeroes if necessary, we can assume that 
\begin{equation*}
\left\vert j\right\vert =\left\lfloor N-5\log n\right\rfloor 
\end{equation*}%
making $j$'s length computable from input $n.$ Using the standard upper
bound of $2\log k$ for a self-delimiting code for $k,$ a program to generate 
$\omega $ will thus have length bounded by 
\begin{equation*}
2\log n+2\log r+(N-5\log n)+C,
\end{equation*}%
where $C$ covers our programming overheads. Clearly, for large enough $n,$
this drops below $N-d$ for any pre-given $d\in \mathbb{N}$. Hence, if $%
K(\omega )\geq N-d,$ then for large enough $n,$ $\omega \notin A_{u},$ or
equivalently, relation (\ref{dadjfad23})  holds.

 \end{proof}
  
  \section{Proof of Theorem \ref{asdadgha1}}
We are now ready to prove Theorem \ref{asdadgha1}. Assume that $(x_n)$ is a complex sequence which converges to $\phi$ as in (\ref{123dadgasa}). For $n \geq 1$ set $\phi_n=x_N$ where $N=2^n$. We also assume that  $K(\omega_n) \geq N -d $ for all $n$ and for a fixed constant $d$ where $\omega_n$ is the code of $\phi_n$. Note
that $\Vert \phi_n - \phi \Vert \leq \frac{C\, n}{\sqrt{N}}$ for a constant $C$ and all large $n$.  Then $S_n(\omega_n, t) = \phi_n(t)$ and Theorem \ref{saf1} implies that
  \begin{eqnarray} \label{dadj12fad23}
\left \vert \int_0^1 e^{i u \phi_n(t)} d\theta_n(t)\right \vert^2 \leq  \frac{1}{u ^{2\alpha - \epsilon }}
\end{eqnarray}
for large $n$ and all $u = n,\, n+\frac{1}{n},\, n+\frac{2}{n},\, \ldots, n+1.$
%We first show that for fixed $n$, and $u\in \{n,\, n+\frac{1}{n},\, n+\frac{2}{n},\, %\ldots, n+1\}$, it is the case that ....

 Note that from relation (\ref{dadj12fad23}), one can deduce that, there exists a constant $C$, such that, for any real number $\xi$ with $ n\leq  \xi \leq n+1$, it is the case that, 
\begin{eqnarray} \label{eq:sfsge}
\left \vert \int_0^1 e^{i \xi \phi_n(t)} d\theta_n(t)\right \vert \leq  \frac{1}{\xi ^{\alpha - \epsilon }} +\frac{C}{n}. 
\end{eqnarray}
Indeed, for $n$ sufficiently large
\begin{eqnarray*}
\left \vert  \int_0^1 (e^{i u \phi_n(t)}  - e^{i \xi \phi_n(t)}) d\theta_n(t) \right\vert  & \leq & \vert  u - \xi\vert \int_0^1 \vert \phi_n(t) \vert d\theta_n(t)\\
 &\leq & C \vert  u - \xi\vert\
\end{eqnarray*}
where $C = \sup_{j\geq 1} \Vert \phi _j\Vert_\infty +1$. (Here we have used the  inequality $|e^{i x} - e^{i y}| \leq |x-y|$.)

Taking  $u$ such that $\vert u - \xi\vert <1/n$ together with (\ref{dadj12fad23}) yields (\ref{eq:sfsge}). 

Now for any real number $\xi$ such that $n \leq \xi \leq n+1$, we have that 
  \begin{eqnarray*}
   \left \vert  \int_0^1 (e^{i \xi \phi_n(t)}  - e^{i \xi \phi(t)}) d\theta_n(t) \right\vert  & \leq & \int_0^1 \xi \vert \phi_n(t) - \phi(t)\vert d\theta_n(t) \\
 & \leq & \xi\Vert \phi_n - \phi \Vert \mbox{ (since } \theta([0, 1]) \leq 1)\\
 &\leq & \frac{C_1\, n(n+1)}{\sqrt{N}}
   \end{eqnarray*}
(by relation (\ref{123dadgasa})) and the choice of $\xi$.) 
In addition,
\begin{eqnarray*}
\left\vert \int_0^1 e^{i \xi \phi} d\theta_n - \int_0^1 e^{i \xi \phi} d\theta \right\vert & = & \left\vert \sum_{j = 1}^N \int_{\frac{j-1}{N}} ^ {\frac{j}{N}} e^{i \xi \phi(j/N)} d\theta(t)  -   \int_{\frac{j-1}{N}} ^ {\frac{j}{N}} e^{i \xi \phi(t)}  d\theta(t)\right\vert\\
& \leq & \sum_{j = 1}^N  \int_{\frac{j-1}{N}} ^ {\frac{j}{N}} \left\vert e^{i \xi \phi(t)} - e^{i \xi \phi(t)} \right\vert d\theta(t) \\
& \leq & \xi (C_2 (1/N) \log N)^{1/2} \sum_{j=1}^N \theta([(j-1)/N, j/N]) \mbox{ (by } (\ref{aasdaadaga})) \\
& \leq & \frac{(n+1) \, \sqrt{C_2 \,n}}{\sqrt{N}}.
\end{eqnarray*}

To summarise, for $n$ sufficiently large and all real $\xi$ in the interval $(n,n+1)$

   \begin{eqnarray*}
 \left\vert  \int_0^1 e^{i \xi \phi(t)} d\theta(t) \right\vert  \leq \left\vert  \int_0^1 e^{i \xi \phi_n(t)} d\theta_n(t)  \right\vert +  \frac{C_1\, n(n+1)}{\sqrt{N}} +  \frac{(n+1) \, \sqrt{C_2 \,n}}{\sqrt{N}}. 
   \end{eqnarray*}
 It follows that for all $n$  large and all $\xi \in (n,n+1)$
  \begin{eqnarray*}
 \left \vert  \int_0^1 e^{i \xi \phi(t)} d\theta(t)\right \vert & = &  \frac{1}{\xi^{\alpha - \epsilon}} + O(\frac{1}{n}) \\
  &= &  \frac{1}{\xi^{\alpha - \epsilon}} + O(\frac{1}{\xi}).
\end{eqnarray*}
In conclusion, for sufficiently large  real number $\xi$, 
  \begin{eqnarray*}
\left  \vert  \int_0^1 e^{i \xi \phi(t)} d\theta(t) \right\vert   &\leq &  \frac{2}{|\xi|^{\alpha - \epsilon}}.% + \frac{1}{2^{u/3}}. 
 \end{eqnarray*}
This extends obviously to negative $\xi$  with $\vert \xi \vert $ large.% that is,  
% \begin{eqnarray*}
  %\vert  \int_0^1 e^{i u \phi(t)} d\theta(t) \vert   &\leq &  \frac{2}{\sqrt{\vert u %\vert ^{2\alpha - \epsilon}}}. %+ \frac{1}{2^{\vert u \vert /3}}. 
  %\end{eqnarray*}
   \qed

\section{Proof of Theorem \ref{adjakfradad1234}} \label{adafafki8}

We will need the following lemma, which is like an integration by parts for  singular measures. 
\begin{lem} \label{dakjasfad}
Let $\mu$ be the measure $\sum_{j=1}^n c_j \delta_{t_j}$ where $c_j$ are positive constants and  $0\leq t_1 < t_2 \ldots  t_n \leq 1$. %  and $\delta_{a}$ denotes  the Dirac measure at $a$. 
Then for any differentiable function $f$ defined on $[0, 1]$,
\begin{eqnarray} \label{eq:sm2}
\int_0^1 f(t) d\mu(t) = \mu[0, 1] f(1) - \int_0^1 f\, {'}(t) \mu[0, t] dt.
\end{eqnarray}
\end{lem}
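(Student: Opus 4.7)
The plan is to reduce both sides of \eqref{eq:sm2} to the same finite sum over the atoms $t_1,\dots,t_n$ and then match them by Abel's summation formula. The left-hand side is immediate from the definition of $\mu$:
$$\int_0^1 f(t)\,d\mu(t)=\sum_{j=1}^{n}c_j f(t_j).$$
So the entire content is to establish the same expression for the right-hand side of \eqref{eq:sm2}.

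For the right-hand side, I first note that the distribution function $t\mapsto \mu[0,t]$ is a step function: writing $C_j=\sum_{k\le j}c_k$ (with $C_0=0$), we have $\mu[0,t]=C_j$ for $t\in[t_j,t_{j+1})$, where I set $t_0=0$ and $t_{n+1}=1$. Therefore, splitting the integral at the atoms,
$$\int_0^1 f'(t)\,\mu[0,t]\,dt=\sum_{j=0}^{n}C_j\int_{t_j}^{t_{j+1}}f'(t)\,dt=\sum_{j=0}^{n}C_j\bigl(f(t_{j+1})-f(t_j)\bigr).$$
The $j=0$ summand vanishes because $C_0=0$, so only $j=1,\dots,n$ contribute.

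Next I would apply Abel's summation (summation by parts) to the remaining sum. Re-indexing the $f(t_{j+1})$ terms and using $C_{j-1}-C_j=-c_j$ gives
$$\sum_{j=1}^{n}C_j\bigl(f(t_{j+1})-f(t_j)\bigr)=C_n f(t_{n+1})-\sum_{j=1}^{n}c_j f(t_j)=\mu[0,1]f(1)-\sum_{j=1}^{n}c_j f(t_j),$$
since $t_{n+1}=1$ and $C_n=\mu[0,1]$. Rearranging yields \eqref{eq:sm2}.

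There is no real obstacle here; the only point requiring a little care is the bookkeeping at the endpoints (ensuring $t_0=0$, $t_{n+1}=1$, and that the step function is right- or left-continuous at the atoms consistently with how the integral $\int_0^1 f'(t)\mu[0,t]\,dt$ is computed on each subinterval, which is a Lebesgue null issue and therefore harmless). I would state the proof exactly as above, presenting it essentially as a one-line Abel summation after recognising that $\mu[0,t]$ is piecewise constant between the atoms.
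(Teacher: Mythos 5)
Your proposal is correct and takes essentially the same route as the paper: both sides are reduced to finite sums by splitting $\int_0^1 f'(t)\,\mu[0,t]\,dt$ over the subintervals $[t_j,t_{j+1}]$ where $\mu[0,t]$ is constant, and the resulting sum is rearranged. The paper collects the coefficient of each $c_j$ directly, obtaining $\sum_j c_j(f(1)-f(t_j))$, while you phrase the same rearrangement as Abel summation — the two are the identical computation organised slightly differently.
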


\begin{proof}
This can be proven by a direct calculation. 
Clearly,  $$\int_0^1 f(t) d\mu(t)  =  \sum_{j=1}^n c_j f(t_j)\mbox{ and  }
\mu[0, 1]f(1) = \sum_{j=1}^n c_j f(1).$$
Also
\begin{eqnarray*}
\int_0^1 f\, {'}(t) \mu[0, t] dt &  = &  c_1 \int_{t_1}^{t_2}  f\, {'}(t) dt + (c_1 + c_2) \int_{t_2}^{t_3}  f\, {'}(t) dt + \ldots  \\
&& +(c_1 + c_2 + \ldots + c_n) \int_{t_{n}}^{1}  f\, {'}(t) dt\\
& = & \sum_{j=1}^n c_j (f(1) - f(t_j)).
\end{eqnarray*}
The lemma follows immediately. 
\end{proof}

%\end{document}
We are now ready to prove Theorem \ref{adjakfradad1234}. 

\proof
Since  
       $\theta_n = \sum_{j = 1}^N c(j) \delta_{j/N}$, one has that      
 $$\int_0^1 e^{i u S_n(t)} d \theta_n(t) = \sum_{j=1}^{N} c(j) e^{i n S_n(j/N)}.$$
Therefore
\begin{eqnarray*}
 \left \vert \int_0^1 e^{i u S_n(t)} d \theta_n(t)\right\vert ^{2q}  & = & \sum_{1\leq j_1, j_2, \ldots, j_{2q} \leq N} c(j_1) c(j_2) \ldots c(j_{2q}) \\ &&\times e^{i u (S_n(j_1/N) + \ldots + S_n(j_q/N))}
 e^{-iu((S_n(j_{q+1}/N) + \ldots + X_n(j_{2q}/N))}.
 \end{eqnarray*}
 By symmetry, this implies that 
 \begin{eqnarray*}
 \left \vert \int_0^1 e^{i u S_n(t)} d \theta_n(t)\right\vert ^{2q} & = & (q!)^2 \sum_{\epsilon \in T} \,\,\sum_{1\leq j_1\leq  \ldots\leq j_{2q} \leq 2^n} c(j_1) c(j_2) \ldots c(j_{2q}) \\ 
&&\times 
 e^{i u (\epsilon_1 S_n(j_1/N) + \ldots + \epsilon_{2q} S_n(j_{2q}/N))}
 \end{eqnarray*}
 where 
  $$T = \{\epsilon = (\epsilon_1, \ldots, \epsilon_{2q}): \epsilon_j = \pm 1 \mbox{ and } \epsilon_1 + \ldots+ \epsilon_{2q} = 0\}.$$
 We rewrite the sum $$\sum_{k=1}^{2q} \epsilon_k S_n(j_k/N)$$
as
 $$ \sum_{k=1}^{2q} \alpha_k [S_n(j_k/N) - S_n(j_{k-1}/N)]$$ where $\alpha_k = \epsilon_k + \ldots+ \epsilon_{2q} \mbox{ and } j_0 = 0. $
  
 Now note that, by definition, the random variables $$B_k = S_n(j_k/N) - S_n(j_{k-1}/N),\;0 \leq k \leq 2q,$$  are independent and each $B_k$ has the same  distribution as $S_n ((j_k - j_{k-1})/N)$. 
Since $S_n(h/N)$ is the sum of $h$ independent and identically distributed variables $V_1, V_2, \ldots, V_h$ with $P\{V_1 = 1/\sqrt{N}\} = P\{V_1 = -1/\sqrt{N}\} = 1/2$, we have that 
\begin{eqnarray*}
\mathbb{E}[e^{i u S_n(h/N)}] &= &\mathbb{E}[e^{i u (V_1 + V_2 + \ldots + V_h)}] \\
 &=& \prod_{j=1}^h \mathbb{E}[e^{i u V_j}] \mbox{ (by independence of the } V_j)\\
 &=& \left((1/2) e^{ i u/\sqrt{N}} + (1/2) e^{-iu/\sqrt{N}}\right)^h \mbox{ (by definition of } V_j)\\
 & = & \left(\cos(u/\sqrt{N}) \right)^h. 
\end{eqnarray*}
Consequently%each taking values with values $\pm 1/\sqrt{N}$ (and probability $1/2$), we obtain that 
  \begin{eqnarray*}
E [e^{i u \alpha_k S_n ((j_k - j_{k-1})/N)}] = [\cos(u\, \alpha_k/\sqrt{N})]^{j_k - j_{k-1}}
. \end{eqnarray*}
  Therefore
 \begin{eqnarray} \label{eq:interst}
 E\left(\left\vert \int_0^1 e^{i u S_n(t)} d \theta_n(t)\right\vert ^{2q}\right) &  = & (q!)^2 \sum_{\epsilon \in T} \,\,\sum_{1\leq j_1\leq  \ldots\leq j_{2q} \leq 2^n} c(j_1) c(j_2) \ldots c(j_{2q}) \times  \nonumber\\
&&  \prod_{k=1}^{2q} [\cos(u\, \alpha_k /\sqrt{N})]^{j_k - j_{k-1}}.
 \end{eqnarray}
Since $\alpha_j$ is the sum of $2q-j+1$ numbers each equal to $\pm 1$, it follows that  $\alpha_j \ne 0$ for all even $j$. Clearly, for any $\epsilon = (\epsilon_1, \ldots, \epsilon_{2q}) \in T$, we have that $|\alpha_k| \leq q$ for all $ 1\leq k \leq 2q$. %For the product in relation (\ref{eq:interst}) one need only consider  the $\alpha_j$ where $j$ is even. 

From the condition imposed on the numbers $n$, $q$, and $u$ ($q \leq \log n$, $n \leq u \leq n+1$, $N = 2^n$) it is clear that %We also assume that (for $\alpha_j \ne 0$), 
$u/\sqrt{N}$ is small enough to ensure  that  $0< \cos(u\, h/\sqrt{N})
\leq \cos(u/\sqrt{N})$ for any $1 < h \leq q$. In
particular $$\cos\left(u\, \alpha_k /\sqrt{N}\right) \leq \cos\left(u/\sqrt{N}\right) \mbox{ for all even } k.$$  For $k$ odd, we use the obvious inequality  $\cos(u\, \alpha_k /\sqrt{N}) \leq 1$. 
%  {\bf We will see that this can be achieved by taking for a large fixed $u$, $n$ such that $n \geq \log u^3$ and $q \leq \log u$. }
 Consequently
  \begin{eqnarray*}
 E\left(\left\vert \int_0^1 e^{i n S_n(t)} d \theta_n(t)\right\vert ^{2q}\right) & \leq & (q!)^2 \sum_{\epsilon \in T} \,\,\sum_{1\leq j_1\leq  \ldots\leq j_{2q} \leq 2^n} c(j_1) c(j_2) \ldots c(j_{2q}) \times \\
&&  [\cos(u/\sqrt{N})]^{j_2 - j_{1}} [\cos(u/\sqrt{N})]^{j_4 - j_{3}} \times\\
 &&\times \ldots [\cos(u/\sqrt{N})]^{j_{2q} - j_{2q-1}}. 
 \end{eqnarray*}
 Since the cardinality of $T$ is $(2q)!/(q!^2)$,  and all the terms are positive,  
 \begin{eqnarray*}
 E\left(\left\vert \int_0^1 e^{i n X_n(t)} d \theta_n(t)\right\vert ^{2q}\right) & \leq & 
 (2q)! \sum_{j_1 = 1}^{N} \sum_{j_2 = j_1}^{N} \sum_{j_3 = j_2}^{N} \ldots \sum_{j_{2q} = j_{2q-1}}^{N}  c(j_1) c(j_2) \ldots c(j_{2q}) \times \\
&& [\cos(u/\sqrt{N})]^{j_2 - j_{1}} [\cos(u/\sqrt{N})]^{j_4 - j_{3}} \times\\
 &&\ldots [\cos(u/\sqrt{N})]^{j_{2q} - j_{2q-1}}\\
 & \leq & (2q)! \sum_{1\leq j_1\leq j_3 \leq \ldots \leq j_{2q-1} \leq N} \sum_{j_2 = j_1}^{j_3} \sum_{j_4 = j_3}^{j_5} \ldots  \sum_{j_{2q} = j_{2q-1}}^{N}\\
&& c(j_1) c(j_2) \ldots c(j_{2q}) \times \\
&& [\cos(u/\sqrt{N})]^{j_2 - j_{1}} [\cos(u/\sqrt{N})]^{j_4 - j_{3}} \times\\
 &&\ldots \times [\cos(u/\sqrt{N})]^{j_{2q} - j_{2q-1}}.
 \end{eqnarray*}
By setting
 $h_k = j_{k} - j_{k-1}$ for $k$ even, and by taking $c(t) = 0$ for $t > 2^n$, we obtain 
 \begin{eqnarray*}
 E\left(\left\vert \int_0^1 e^{i u S_n(t)} d \theta_n(t)\right\vert ^{2q}\right)  &\leq & (2q)! 
\sum_{1\leq j_1\leq j_3 \leq \ldots \leq j_{2q-1} \leq N} \sum_{h_2 = 0}^{N} \sum_{j_4 = 0}^{N} \ldots  \sum_{h_{2q} = 0}^{N}\\
&& c(j_1) c(j_3) \ldots c(j_{2q-1}) \times \\
&& c(j_1+ h_2) c(j_3 +h_4) \ldots c(j_{2q-1} + h_{2q})\\
&& [\cos(u/\sqrt{N})]^{h_2} [\cos(u/\sqrt{N})]^{h_4} \times\\
 &&\ldots \times [\cos(u/\sqrt{N})]^{h_{2q}}.
 \end{eqnarray*} 
We now estimate the sum 
 $$M = \sum_{h=0}^{N} c(r + h) [\cos(u/\sqrt{N})]^{h},\,\ r \leq N. $$ Clearly, 
       $$M = \int_0^1\left[\cos\left(u/\sqrt{N}\right)\right]^{N t} d\mu(t)\,\,\mbox{ where } \mu = \sum_{j=1}^{N} c(j+r) \delta_{j/N} $$ and $c(h) = 0$ for $h \notin \{1, 2, \ldots, N\}.$
       
By  Lemma \ref{dakjasfad}, we obtain that 
 $$\sum_{h=0}^{N} c(r + h)\left[\cos\left(u/\sqrt{N}\right)\right]^{h}  = \mu[0, 1]\, a^{N}  - N \log a \int_0^1 a^{t\, N} \mu[0, t]dt $$
where $$ a = \cos\left(u/\sqrt{N}\right).$$ 
Since $\theta$ is an $\alpha$-Frostman measure and $\theta_n$ is its $n$-step approximation, it follows from relation (\ref{eadagasdw}) that, \begin{eqnarray}
\mu[0, t] = \theta_n[r/N, t+ r/N]\leq C t^{\alpha}\,\, \mbox{ for } t \geq 1/N
\end{eqnarray} 
and \begin{eqnarray} \label{wewedfw}
\theta_n(I) \leq 1/N^{\alpha}
\end{eqnarray} for every interval $I$ such that $\vert I \vert \leq
1/N$.
We also have that $0 < a < 1$. 
Then, 
  \begin{eqnarray*}
\sum_{h=0}^{N} c(r + h)\left[\cos\left(u/\sqrt{N}\right)\right]^{h} & \leq & \theta_n[0, 1]a^{N} \\
 && -N \log a \left(\int_0^{1/N} a^{t\, N} \theta_n[r, r+t] dt  + C \int_{1/N}^1 a^{t\, N} t^\alpha dt \right).
\end{eqnarray*}
By (\ref{wewedfw}), we have that \begin{eqnarray*}
N \log a\int_0^{1/N} a^{t\, N} \theta_n[r, r+t] dt & \leq  & N\log a \int_0^{1/N} a^{t\, N} (1/N^{\alpha})\, dt \\
& = & (a-1)/N^{\alpha}
\end{eqnarray*}
 and 
$$C N \log a \int_{1/N}^1 a^{t\, N} t^\alpha dt  = \frac{C}{(-N \log a)^\alpha} \Gamma(\alpha + 1). $$
Therefore (using $\theta_n[0, 1] = \theta[0,1] \leq 1$),
\begin{eqnarray} \label{32wqwrwsre}
\sum_{h=0}^{N} c(r + h)\left[\cos\left(u/\sqrt{N}\right)\right]^{h} & \leq &  a^{N} + \frac{1 - a}{N^{\alpha}} + \frac{C}{(N \log(1/a))^\alpha} \Gamma(\alpha + 1).
\end{eqnarray}
%& \leq & \theta_n[0, 1] a^{N} - \frac{a - 1}{N^{\alpha}} -C\, N \log a \int_{N^{-1}}^1 a^{t\, N} t^\alpha dt \nonumber \\
%&\leq & \theta_n[0, 1] a^{N} - \frac{a - 1}{N^{\alpha}} + \frac{C}{(-N \log a)^\alpha} \Gamma(\alpha + 1) \nonumber\\
%&= & a^{N} + \frac{1 - a}{N^{\alpha}} + \frac{C}{(N \log(1/a))^\alpha} \Gamma(\alpha + 1).
%\end{eqnarray}
We next estimate each term in  (\ref{32wqwrwsre}). 
\begin{enumerate}[label=\arabic*.]
\item  Clearly,  $(1-a)/N^{\alpha} \leq 1/u^{4\alpha}$.

\item For the quantity $a^{N}$ with  $a = \cos\left(u/\sqrt{N}\right)$, 
 we use the following estimates  (``Taylor expension"):
 \begin{eqnarray*}
 \cos(x) & \leq &  1-x^2/2 + x^4/4!\\
 \log(1-x) & \leq &  -x  \mbox{ for } x \geq 0\\
 \log(\cos(x)) & \leq &  -\frac{x^2}{2} + \frac{x^4}{4!} \\
 \end{eqnarray*}
 Therefore, 
  \begin{eqnarray*}
  \log a^{N} & = & N \log \cos\left(u/\sqrt{N}\right)\\
& \leq & N \left(-\frac{u^2}{2\times N } + \frac{u^4}{4!\times N^{2}}\right)\\
& = & -\frac{u^2}{2} + \frac{u^4}{4! \times N}. 
  \end{eqnarray*}
Now since (for large $n$),  $u$ is such that $u^2 \leq \sqrt{N}$, it follows that
\[\log a^{N}  \leq -\frac{u^2}{2} + 1 \leq -\frac{u^2}{3}\] for  large $u$. Therefore,
 $a^{N} \leq e^{-\frac{u^2}{3}}.$ This quantity converges to zero for $u \rightarrow \infty$. 
   
\item For the quantity $$\frac{C}{(N \log(1/a))^\alpha},$$
    we have that 
      \begin{eqnarray*}
      N \log(1/a) & =&  - N \log \cos\left(u/\sqrt{N}\right)\\
& \geq &  -N (-\frac{u^2}{2\times N} + \frac{u^4}{4!\times N^{2}}) \\
& \geq &  \frac{u^2}{2} - 1  \\
& \geq & \frac{u^2}{3}\, \mbox{ (for large } n ),  
      \end{eqnarray*}
Therefore $$\frac{C}{(N \log(1/a))^\alpha} \leq 3 C/u^{2\alpha}$$
and 
  \begin{eqnarray*}
\sum_{h=0}^{N} c(r + h) \left[\cos\left(u/\sqrt{N}\right)\right]^{h} & \leq & e^{-u^2/3} + \frac{1}{u^{4\alpha}} + \frac{3C\Gamma(\alpha +1)}{u^{2\alpha}}\\
&\leq & \frac{11}{u^{2\alpha}}
\end{eqnarray*}
 by taking $C = 3$ and $\alpha \in [0, 1]$ and using the fact that the  quantities  $e^{-u^2/3}$ and $1/u^{4\alpha}$  are dominated by $1/u^{2\alpha}$ for relatively large $u$. % ($n \geq 3$). 
 Hence
\begin{eqnarray*}
  E\left(\left\vert \int_0^1 e^{i u S_n(t)} d \theta_n(t)\right\vert ^{2q}\right)  &\leq & (2q)! (11\, u^{-2\alpha})^q \times\\
&&\sum_{1\leq j_1\leq j_3 \leq \ldots \leq j_{2q-1} \leq N} c(j_1) c(j_3) \ldots c(j_{2q-1}) .
\end{eqnarray*}  
Finally, 
   \begin{eqnarray*}
 \sum_{1\leq j_1\leq j_3 \leq \ldots \leq j_{2q-1} \leq 2^n} c(j_1) c(j_3) \ldots c(j_{2q-1}) & = & \int_{0\leq t_1 \leq t_2 \ldots t_{q}\leq 1} d\theta(t_1) d\theta(t_2) \ldots d\theta(t_q),\\
 \end{eqnarray*}
 and we observe that, the integral $$\int_{0\leq t_{\sigma(1)} \leq t_{\sigma(2)} \ldots t_{\sigma(q)}\leq 1} d\theta(t_1) d\theta(t_2) \ldots d\theta(t_q)$$ is the same  for all permutations $\sigma$ of $\{1, 2, \ldots, q\}$.
 Therefore,   
      \begin{eqnarray*}
\int_{0\leq t_1 \leq t_2 \ldots t_{q}\leq 1} d\theta(t_1) d\theta(t_2) \ldots d\theta(t_q) & = &\frac{1}{q!} \int_0^1 \int_0^1 \ldots\int_0^1 d\theta(t_1) d\theta(t_2) \ldots d\theta(t_q) \\
& \leq & \frac{1}{q!}
\end{eqnarray*} 
since $\theta[0, 1] \leq 1$. 

It follows that 
 \[%begin{eqnarray*}
  E\left(\left\vert \int_0^1 e^{i u S_n(t)} d \theta_n(t)\right\vert
  ^{2q}\right)\ \leq\ \frac{(2q)!}{q!} (11\, u^{-2\alpha})^q\ \leq\ (22\,q\,u^{-2\alpha})^q.\eqno{\qEd}
\]%end{eqnarray*}
\end{enumerate}

\end{document}